\documentclass[journal,twocolumn]{IEEEtran}
\IEEEoverridecommandlockouts
\usepackage{cite}
\usepackage{amsmath,amssymb,amsfonts}
\usepackage{graphicx}
\usepackage{epstopdf}
\usepackage{subfigure}
\usepackage{stfloats}
\usepackage{textcomp}
\usepackage{xcolor}
\usepackage{flushend}
\usepackage{array}
\usepackage{algorithm}
\usepackage{mathtools}
\usepackage{algpseudocode}
\usepackage{multirow}
\usepackage{amsthm}

\newtheorem{corollary}{Corollary}
\newtheorem{lemma}{Lemma}
\newtheorem{proposition}{Proposition}
\newtheorem{remark}{Remark}

\def\BibTeX{{\rm B\kern-.05em{\sc i\kern-.025em b}\kern-.08em
    T\kern-.1667em\lower.7ex\hbox{E}\kern-.125emX}}

\DeclareMathOperator{\sinc}{sinc}

\algnewcommand\algorithmicforeach{\textbf{for each}}
\algdef{S}[FOR]{ForEach}[1]{\algorithmicforeach\ #1\ \algorithmicdo}

\begin{document}

\title{On the Security of Directional Modulation via Time Modulated Arrays Using OFDM Waveforms}

\author{
 Zhihao~Tao and~Athina~Petropulu
\thanks{The authors are with the Department of Electrical and Computer Engineering, Rutgers The State University of New Jersey, New Brunswick, NJ, USA (email: zhihao.tao@rutgers.edu; athinap@soe.rutgers.edu). This work was supported by ARO grant W911NF2320103 and NSF grant ECCS-2320568.}
}

\maketitle

\begin{abstract}
Time-modulated arrays (TMAs) transmitting information bearing orthogonal frequency division multiplexing (OFDM) signals can achieve directional modulation.  By turning its antennas on and off in a periodic fashion, the TMA can be configured to transmit the OFDM signal undistorted in the direction of a legitimate receiver and scrambled everywhere else. This capability  has been proposed as means of  securing the transmitted information from unauthorized users.
In this paper, we investigate how secure the TMA OFDM system is, by looking at the transmitted signal from an eavesdropper's point of view.
We demonstrate that the symbols observed by the eavesdropper across the OFDM subcarriers are linear combinations of the source symbols, with mixing coefficients that are unknown to the eavesdropper. We propose the use of independent component analysis (ICA) theory to obtain the mixing matrix and provide methods to resolve the  column permutation and scaling ambiguities, which are inherent in the ICA problem, by leveraging the structure of the mixing matrix and assuming knowledge of the characteristics of the TMA OFDM system.
In general, resolving the ambiguities and recovering the symbols requires long data. Specifically for the case of the constant modulus symbols, we propose a modified ICA approach, namely the constant-modulus ICA (CMICA), that provides a good estimate of the mixing matrix  using a small number of received samples.
%
We also propose countermeasures which the TMA could undertake in order to defend the scrambling. Simulation results are presented to demonstrate the effectiveness, efficiency and robustness of our scrambling defying and defending schemes.

\end{abstract}

\begin{IEEEkeywords}
Directional modulation (DM), constant modulus signals, independent component analysis (ICA), OFDM, physical layer security (PLS), time-modulated array (TMA). 
\end{IEEEkeywords}

\section{Introduction}

The broadcast nature of wireless transmission renders wireless and mobile communication systems vulnerable to eavesdropping. Physical layer security (PLS) approaches, originating from Wyner's wiretap channel work \cite{Wyner1975Wire}, offer information secrecy by exploiting the physical characteristics of the wireless channel. PLS methods can complement traditional cryptographic approaches, particularly in scenarios where the latter methods encounter difficulties in providing low latency and scalability due to challenges with key management or computational complexity \cite{poor2017wireless, qiu2023decomposed}.

Directional modulation (DM) \cite{daly2009dire} is a promising physical layer security (PLS) technique that has attracted significant interest in recent years. DM transmits digitally modulated signals intact only along pre-selected spatial directions, while distorting the signal in all other directions \cite{su2021secure, Xiao2023Synthesis}.  Compared with PLS approaches such as cooperative relaying strategies \cite{dong2009improving,Li2011cooperative,Li2020relay} and transmission of artificial noise \cite{zhang2019AN,wang2017AN}, DM-based methods are more energy- and cost-efficient \cite{su2022secure}.
%

DM can be implemented through waveform design or by modifying the transmitter hardware. For the former approach, \cite{daly2009dire} uses phase shifters to adjust the phases of each symbol, while \cite{Ottersten2016} and \cite{Alodeh2016DM} employ transmit precoders.
The works of \cite{su2022secure,khandaker2018secure,khandaker2018constructive} exploit constructive interference in designing transmit waveforms, where the alignment of the received signal with the intended symbols is not required, but rather, the signal is shifted away from the detection boundary of the signal constellation. The methods of \cite{daly2009dire,Ottersten2016,Alodeh2016DM, su2022secure,khandaker2018secure,khandaker2018constructive} require the location information on the eavesdroppers or channel state information (CSI)
which increases the communication overhead.
%
%
The works of \cite{valliappan2013antenna,ding2017Circular,Hamdi2016subset,ding2017free}  operate on the transmitter hardware and do not require CSI nor the location  of eavesdroppers. For example,  \cite{valliappan2013antenna} adopts a large antenna array working at millimeter-wave frequencies and proposes an antenna subset modulation-based DM technique. By appropriately selecting a subset of antennas for the transmission of each symbol, the radiation pattern can be modulated in a direction-dependent way, which yields  randomness to the constellations seen from directions other than the intended angles.
In \cite{ding2017free}, a retrodirective array, 
is proposed to implement the DM functionality.
Using a pilot signal provided by the legitimate receiver and appropriately designed weights, the retrodirective array creates a far-field radiation pattern consisting of two components: the information pattern and the interference pattern. The interference pattern is zero only along the direction of the legitimate user, thereby distorting the information signals in all other directions.
Time-modulated arrays (TMAs)  \cite{tvt2024security, manica2009almost, Massa20144d,Massa2018time} is another DM approach that operates on the transmit hardware but also  introduces time as an additional degree of freedom in the DM design. 
TMAs use switches to periodically
connect and disconnect the transmit antennas to the RF chain \cite{Kummer1963new, Kummer1963Ultra}. In \cite{manica2009almost, Massa20144d,Massa2018time}, which consider the single carrier system and transmit one symbol at a time, the radiation pattern of the array in each symbol is optimally computed via  global optimization tools, e.g., evolutionary algorithms, so that the transmitted signals are delivered  undistorted within a desired angular region,  while they are maximally distorted  elsewhere. Even though TMA-based approaches are more flexible as compared to other DM methods, they \cite{manica2009almost, Massa20144d,Massa2018time} 
involve computationally intensive optimization methods, and thus their complexity increases in dynamic environments, where the system configuration needs to change.
%
A low computational complexity  TMA DM approach has been proposed via the use of 
orthogonal frequency-division multiplexing (OFDM) transmit waveforms \cite{tvt2019time}.
By appropriately selecting the TMA parameters, the transmitter sends a scrambled signal in all directions except toward the legitimate destination. The scrambling effect occurs because the designed periodic antenna activations generate harmonics at the OFDM subcarrier frequencies, causing symbols on each subcarrier to mix with symbols from other subcarriers. The TMA parameters can be derived using closed-form expressions and simple rules, allowing the DM functionality to be implemented by configuring the transmitter hardware according to these rules, without requiring global optimization. As a result, the OFDM TMA offers low complexity and is easy to deploy in dynamic scenarios.
The TMA transmitter described in \cite{tvt2019time} is also applicable to modern wireless communication systems that support multiple carriers. These advantages make OFDM TMAs highly attractive for achieving directional modulation.

DM via TMAs transmitting OFDM waveforms has been studied in various applications, e.g., antenna array designs \cite{Wu2022Metamaterial, vosoughitabar2023directional, nooraiepour2023programming}, multicarrier systems \cite{shan2021multi}, target localization \cite{shan2022target}, joint communication and sensing systems \cite{zhaoyi2022TMA} and intelligent reflecting surface systems \cite{xu2024tma}, where their good potential for enhancing PLS has been demonstrated. {However, existing studies have primarily focused on TMA hardware implementation, energy efficiency improvement, and ON-OFF pattern design, while largely overlooking the security of the TMA OFDM system. The study in \cite{nooraiepour2022time} concludes that DM can effectively prevent such spoofing. Also, in  \cite{li2022chaotic}, the authors argue that the DM via TMAs transmitting OFDM waveforms has weak security due to the limited randomness of the periodic time modulation pattern and propose a chaotic-enabled phase modulation for TMA to enhance wireless security. However,  \cite{li2022chaotic} does not explore the possibility of an eavesdropper defying the TMA security.}

In this paper, we investigate the level of security provided by TMA-induced scrambling and demonstrate, for the first time, that the TMA OFDM system is not sufficiently secure unless specific measures are taken.
Specifically, we first show that the vector of the symbols received by the eavesdropper on all OFDM subcarriers 
can be expressed as the product of a mixing matrix and a vector containing the information symbols. The mixing matrix has a Toeplitz structure, but is  otherwise unknown to the eavesdropper as it depends on the TMA parameters.
{We show that, based on the received scrambled data, the eavesdropper can obtain the mixing matrix via an ICA-based approach, and also resolve all  ambiguities.}
 For the case of  constant modulus symbols,
  we propose an  ICA variant, referred to here as  constant-modulus ICA (CMICA).
{Assuming that the data symbols are non-Gaussian,  ICA recovers the symbols by designing an unmixing matrix, which, when applied to the received data results in non-Gaussian data.}
 The ICA objective  is a measure of 
 non-Gaussianity that needs to be maximized.
 %
%
 Initially,  CMICA is identical to ICA \cite{bingham2000fast} with a  Newton iteration,  in each iteration decorrelating the  unmixing matrix  in order to recover  independent  source signals.
 When using a small sample size, there may still be some residual dependence among the recovered signals even after convergence of the initial iteration. To address this, CMICA introduces a fine-tuning stage, where the iteration continues without the decorrelation step. Instead of continuing with the Newton iteration, CMICA switches to the gradient descent method, allowing for adjustable step sizes and better control during the fine-tuning process. The result from the Newton iteration serves as the initialization for the gradient descent iteration. Extensive simulations demonstrate that CMICA improves the estimation of the mixing matrix, even with limited data samples.

Regarding resolving the ICA ambiguities, we propose  a novel  $k$-nearest neighbors (KNN)-based  approach  that leverages the mixing matrix Toeplitz structure.
In particular, we first construct a similarity measure that allows us to rank matrices based on their resemblance to a Toeplitz matrix. Using this similarity measure, along with knowledge about the TMA OFDM system—such as the data constellation and the rules for selecting TMA parameters—we demonstrate how to identify the best matrix from among the column-reordered versions of the estimated mixing matrix.
Finally, we identify two scenarios in which the proposed defying scheme would fail, i.e.,  when the mixing matrix is rank-deficient,  or when there  TMA ON-OFF switching pattern is not unique. Based on these situations, we design defensive mechanisms that the transmitter can use to protect the scrambling process against eavesdroppers.

The novel contributions of this paper are summarized as follows:
\begin{enumerate}
\item We show that the recovery of the transmitted symbols based on the received scrambled symbols can be addressed as an ICA problem. We propose a low-complexity approach, which, under certain assumptions can  
  solve the ICA problem and all ambiguities involved,  thereby circumventing the TMA scrambling.
\item We propose an ICA variant called CMICA,  which is particularly well suited for  constant modulus symbols and works well for case of short observation lengths.

%
To eliminate ICA-inherent scaling and permutation ambiguities, we propose  a novel  $k$-nearest neighbors (KNN)-based  approach  that leverages the mixing matrix Toeplitz structure. We first construct  a measure that quantifies  the
Toeplitz resemblance of a matrix, and then 
 along with knowledge about the TMA OFDM system—such as the data constellation and the rules, we select the best fit from among the column-reordered versions of the
estimated mixing matrix. 

\item We identify two cases in which the scrambling is strong and thus secure transmission can be guaranteed. {The first scenario occurs when the mixing matrix is rank-deficient, leading to multiple possible solutions for the transmitted symbols. The second scenario arises when there is non-uniqueness in the ON-OFF switching pattern of the TMA, meaning that the scrambled signals can correspond to multiple configurations of the TMA parameters.}
{Furthermore, we propose two scrambling defense mechanisms. The first mechanism involves rotating the TMA transmitter to create a specific angular offset between the legitimate receiver and the eavesdropper, ensuring that the conditions for secure transmission are met. The second mechanism introduces controlled variations to the ON-OFF switching pattern or the transmitted symbols, complicating the eavesdropper's ability to exploit the statistical structure of the signals necessary for ICA-based attacks.} These cases and mechanisms reveal strategies for enhancing the wireless security of TMA in the future. 


\end{enumerate}

%

{Preliminary results of this work are presented in \cite{tao2024tma} and \cite{tao2023tma}. Compared to those publications, here, we  further improve the efficiency of the ICA-based estimation method and the robustness of the ambiguity resolving algorithm.
We also provide deeper insights into the wireless security of the TMA OFDM-enabled DM system and propose methods to enhance its security. To the best of our knowledge, this paper is the first  to assess and analyze the wireless security of directional modulation via TMA OFDM.}

The remainder of this paper is organized as follows. In Section II, we describe the system model of the TMA OFDM-enabled DM transmitter. In Section III, we elaborate the proposed defying scheme, including the CMICA-based mixing matrix estimation approach and the KNN-based ambiguity resolving algorithm. In Section IV, we illuminate how to defend the defying of eavesdroppers so as to enhance the wireless security of TMA systems. Section V includes numerical results and analyses. Finally, we conclude our work in Section VI.


\emph{Notations}: Throughout the paper, we use boldface uppercase letters, boldface lowercase letters and lowercase letters to denote matrices, column vectors and scalars, respectively. $(\cdot)^T$, $(\cdot)^*$, $(\cdot)^{\dag}$, $(\cdot)^{-1}$, $|\cdot|$, and $\|\cdot\|$ correspond to the transpose, complex conjugate, complex conjugate transpose, inverse, modulus, and $l_2$ norm, respectively. The notation $E\{\}$ denotes the expectation operation and $\boldsymbol{I}$ is the identity matrix.

\section{System model}\label{sysmodel}
We consider a TMA, comprising a phased array with $N$ antennas spaced apart by half wavelength $\lambda_0/2$ (see Fig. \ref{f1}). The transmit waveform is OFDM waveform with  $K$ subcarriers spaced by $f_s$. 
\begin{equation}
{x}(t) = \frac{1}{\sqrt{K}} \sum \limits_{l=1}^{K} s_l e^{j2 \pi [f_0+(l-1)f_s]t}, \quad t\in[0,T_s)
\end{equation}
where $s_l$ is the digitally modulated data symbol assigned to the $l$-th subcarrier,  {$T_s$ is the symbol duration,} $f_0$ denotes the frequency of the first subcarrier and $1/\sqrt{K}$ is the power normalization coefficient that normalizes $s_l$ to be unit power. Note that we eliminate the index of the transmitted OFDM symbol here as the following analyses are independent of the symbol  transmitted. 

{We will assume that the eavesdropper know the channel perfectly and can compensate for its effect. Therefore, the channel will not be shown in the expressions.
If the eavesdropper is unaware of the channel, the system remains secure, preventing the eavesdropper from unscrambling the signal. However, there are situations where the eavesdropper may have knowledge of the channel, such as when the eavesdropper is a friendly node but does not have  authorization to receive the information. Our work specifically addresses this latter scenario, where the eavesdropper is assumed to know the channel.}

\begin{figure}[t]
\centerline{\includegraphics[width=2.4in]{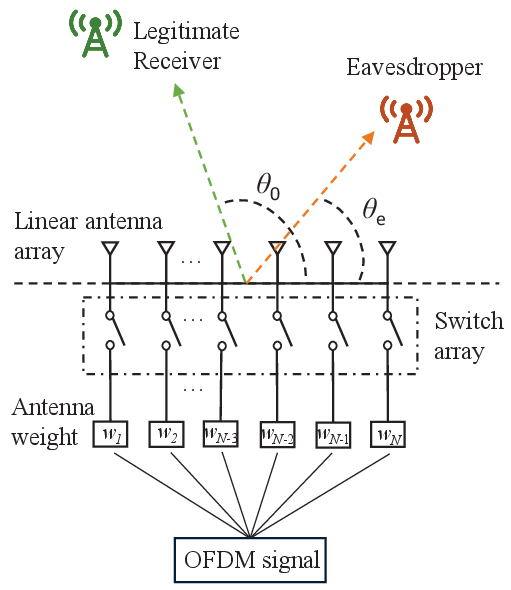}}
\caption{Illustration of the TMA OFDM-enabled DM transmitter.}\label{f1}
\end{figure}

Before being radiated into the half space, $\theta \in [0, \pi]$, the OFDM symbol needs to be multiplied by antenna weights $\{w_n\}_{n=1,2,...,N}$ and manipulated by a ON-OFF temporal function $U(t)$ that controls the switch array periodically. Let the wavelength $\lambda_0$ associate with $f_0$.
The signal radiated by the TMA OFDM system can be expressed as
\begin{equation}
{y}(t, \theta) = \frac{1}{\sqrt{N}} \sum \limits_{n=1}^{N} {x}(t)  w_n  U_n(t)  e^{j(n-1) \pi \cos \theta}.
\end{equation}
In order to focus the beam towards the direction of the legitimate user, $\theta _0$, we set $w_n = e^{-j(n-1) \pi \cos \theta _{\rm 0}}$. The ON-OFF switching function $U_n(t)$ is designed as a periodic square waveform with the time period being $T_s$. On denoting the normalized switch ON  time instant and the normalized ON time duration as $\tau_n^o$ and $\Delta \tau_n$, respectively, we can express $U_n(t)$ as Fourier series as follows: 
\begin{equation}
    U_n(t) = \sum _{m=-\infty}^{\infty} a_{mn} e^{j2 m \pi f_s t},
\end{equation}
where
\begin{equation}\label{sincEq}
    a_{mn} = \Delta \tau _n \sinc (m \pi \Delta \tau _n)  e^{-jm \pi (2 \tau _n^o + \Delta \tau _n)}.
\end{equation}
Here $\sinc (\cdot)$ is an unnormalized sinc function. By combining the above equations, we write the transmitted symbol as
\begin{equation}\label{eq1}
    {y}(t, \theta) = \frac{1}{\sqrt{NK}}  \sum_{l=1}^{K} s_k  e^{j2 \pi [f_0+(l-1)f_s]t}  \sum_{m=-\infty}^{\infty} e^{j2m \pi f_s t}V_m,
\end{equation}
where
\begin{equation}\label{eq2}
    V_m =  \sum \limits_{n=1}^{N} a_{mn} e^{j(n-1) \pi (\cos \theta - \cos \theta _{\rm 0})}.
\end{equation}

Then, the signal seen in direction $\theta$ on the  $i$-th subcarrier equals 
\begin{equation}\label{eq3}
    {y}_i (t, \theta) = \frac{1}{\sqrt{NK}} \sum \limits_{l=1}^{K} s_k e^{j2 \pi [f_0+(l-1)f_s]t} V_{i-l}.
\end{equation}
After OFDM demodulation, the received data symbol on the $i$-th subcarrier can be expressed as $y_i (\theta) = 1/\sqrt{NK} \sum_{l=1}^{K} s_k V_{i-l}$. The received signals on all subcarriers without noises, put in vector $\boldsymbol{y}$, can be expressed as 
\begin{equation}\label{eq4}
    \boldsymbol{y} = \boldsymbol{V} \boldsymbol{s}, 
\end{equation}
where  $\boldsymbol{V} \in \mathbb{C}^{K\times K}$ is a Toeplitz matrix defined  as 
\begin{equation}\label{eq5}
    \boldsymbol{V} = \frac{1}{\sqrt{NK}} \left[ \begin{array}{ccccc}
V_0& V_{-1}& \cdots& V_{-(K-2)}& V_{-(K-1)}\\
V_1& V_0& \cdots& V_{-(K-3)}& V_{-(K-2)}\\
\vdots& \vdots& \ddots& \vdots& \vdots \\
V_{K-2}& V_{K-3}& \cdots& V_0& V_{-1}\\
V_{K-1}& V_{K-2}& \cdots& V_1& V_0
\end{array} \right],
\end{equation}
and $\boldsymbol{s} = [s_1, s_2, \cdots, s_K]^T$. In order to implement  DM functionality, $\tau _n^o$ and $\Delta \tau _n$ must be  chosen  to satisfy $V_{m \ne 0}(\tau _n^o,\Delta \tau _n,\theta = \theta _0) = 0$ and $V_{m = 0}(\tau _n^o,\Delta \tau _n, \theta = \theta _0) \ne 0$. These can be achieved by  the following three conditions \cite{tvt2019time}:
\begin{itemize}
    \item (C1) $\Delta \tau _n , \tau _n^o \in \{\frac{h-1}{N}\}_{h=1,2,...,N}$ (note that the subscript $n$ is not necessarily equal to $h$);
    \item (C2) $\tau _p^o \ne \tau _q^o, \Delta \tau _p = \Delta \tau _q = \Delta \tau$ for $p \ne q$;
    \item  (C3) $\sum_{n=1}^N \Delta \tau _n \ne 0$. 
\end{itemize}  By substituting these three conditions into the above equations, we can find that along $\theta _0$, $\boldsymbol{V}$ is a diagonal matrix and the received OFDM signal  equals
\begin{equation}\label{legireceiver}
    {y}(t, \theta _0) = \Delta \tau \sqrt{\frac{N}{K}} {s}(t).
\end{equation}
In all other directions, the signal of each subcarrier contains the harmonic signals from all other subcarriers, which gives rise to the so-called scrambling,  hence achieving the PLS. Taking into account the additive noise  $\boldsymbol{z} = [z_1, \cdots, z_k, \cdots, z_K]^T$, where $z_k$ is a i.i.d. Gaussian random variable with zero mean and the same variance $\sigma _z^2$, the received signals can be written as
\begin{equation}
    \boldsymbol{y} = \boldsymbol{V} \boldsymbol{s} + \boldsymbol{z}.
\end{equation}

In the following, we will consider the centered and  whitened received signal, i.e.,
\begin{equation}\label{whitening}
\tilde {\boldsymbol{y}}=\boldsymbol{A}\boldsymbol{y}
=\tilde {\boldsymbol{V}}\boldsymbol{s} + \boldsymbol{A} \boldsymbol{z}
\end{equation}
where $\boldsymbol{A}$ is the whitening matrix, obtained based on the eigenvalue decomposition of the covariance  of $\mathbf{y}$ \cite{oja2000ica}, or the quasi-whitening matrix when using Gaussian moments to handle noises \cite{hyvarinen1999gaussian}. For independent, zero-mean, unit-variance inputs $s_k$, and noiseless case, we have
\begin{equation}
 E\{\tilde{\boldsymbol{y}} \tilde{\boldsymbol{y}}^{\dag} \} = \tilde{\boldsymbol{V}} E\{\boldsymbol{s}\boldsymbol{s}^{\dag}\} \tilde{\boldsymbol{V}}^{\dag} = \tilde{\boldsymbol{V}} \tilde{\boldsymbol{V}}^{\dag} = \boldsymbol{I}.
\end{equation}

\section{On Defying  the TMA scrambling by the Eavesdropper}
Let us assume the presence of an eavesdropper in direction $\theta _e$ ($\theta_e \ne \theta _0$). Due to (C1)-(C3),  one can see that, along direction  $\theta _e$, the received OFDM signal on each subcarrier is scrambled by the data symbols modulated onto all other subcarriers, since for $\theta = \theta_e $, $\boldsymbol{V}$ is not diagonal.

Note that $\boldsymbol{y}$ in \eqref{whitening}, contains linear mixtures of the elements of $\boldsymbol{s}$. Both $\boldsymbol{s}$, $\tilde{\boldsymbol{V}}$ are  unknown to the eavesdropper, so the recovery of $\boldsymbol{s}$ can be viewed as a  blind source separation problem. In communications, the elements of $\boldsymbol{s}$ are typically statistically independent with each other and non-Gaussian. Thus, the eavesdropper can leverage an ICA method to estimate $\tilde{\boldsymbol{V}}$, and then, recover $\boldsymbol{V}$ based on $\boldsymbol{A}$.
 
In this section, we first introduce the application of ICA, based on which we propose CMICA, an algorithm for estimating the mixing matrix using short-length data, and then we show how to resolve the ambiguities and fully recover the source signals.

\subsection{The Proposed CMICA for Estimating the Mixing Matrix}
The ICA attempts to recover the mixed   data based on the fact that a linear mixture of independent, non-Gaussian random variables is more Gaussian than the original  variables. Hence, the goal of ICA is to 
find an unmixing matrix $\boldsymbol{W}=[\boldsymbol{w}_1, \cdots, \boldsymbol{w}_l, \cdots, \boldsymbol{w}_K]$ that maximizes the non-Gaussianity of $\boldsymbol{W}^T \tilde{\boldsymbol{y}}$. When $ \boldsymbol{w}_l^T \tilde{\boldsymbol{y}}$ is least Gaussian it is equal to some element of $\boldsymbol{s}$ \cite{oja2000ica}.
 To find more elements of $\boldsymbol{s}$ we need to constrain the
search to the space that gives estimates uncorrelated with the previous ones.

The non-Gaussianity can be quantified via the   kurtosis or the negentropy, {both of which can  be formulated based on each $\boldsymbol{w}_l$ as \cite{bingham2000fast}
\begin{equation} \label{nonGaussianity}
    J_G(\boldsymbol{w}_l) = E\{G(|\boldsymbol{w}_l^{{\dag}}
\tilde{\boldsymbol{y}}|^2)\},
\end{equation}
}
where $G(\cdot)$ is a smooth contrast function, chosen as $  G(v) = \frac{1}{2} v^2$ to approximate  kurtosis, and $G(v) = -\exp(-v/2)$ to approximate negentropy.
Since $\tilde{\boldsymbol{y}}$ is white and zero-mean,  $v$ has  zero mean and unit variance.

%
Here, we need to maximize the sum of $K$ non-Gaussianity quantifiers, one for each subcarrier. We obtain the following constrained optimization problem:
\begin{equation}\label{OPT}
\begin{aligned}
    \underset{{\boldsymbol{w}_1,...,\boldsymbol{w}_K}}{\max} \quad & \sum _{l=1}^{K} J_G(\boldsymbol{w}_l) \\
    \textrm{s.t.} \quad & \boldsymbol{w}_i^{\dag} \boldsymbol{w}_j = \zeta_{ij}, \quad i, j = 1,2,\cdots, K
\end{aligned}
\end{equation}
where $\zeta_{ij} = 1$ for $i=j$ and $\zeta_{ij}=0$ otherwise.
To solve the problem of \eqref{OPT}, we can adopt a FastICA algorithm \cite{bingham2000fast,tao2024tma}, via which the unmixing weights are updated using a fixed-point iteration scheme, {where, in each iteration,  the new weight vector is obtained by the fixed-point algorithm, normalized to unit magnitude, and then ensuring that it is decorrelated from the previously estimated ones via a matrix-based orthogonalization method.}

Even though FastICA converges fast, it needs a large sample set to achieve low estimation error. This is because the non-Gaussianity metric is computed based on the mean of a function of the collected samples, and more samples lead to better mean estimate and better ICA estimates. Good ICA estimates are essential for the subsequent steps of resolving the ambiguities. However, obtaining a large sample of data may not be possible in dynamic communication environments, or in case where the TMA system parameters vary over time. Next, we will show how one could obtain good estimates with a small number of data samples.

In this section, we propose a two-stage method to obtain a (locally) optimum weights. In the first stage, the Newton iteration and the symmetric decorrelation operation are applied to solve the problem of \eqref{OPT}, along the lines of FastICA \cite{bingham2000fast}.
{The result of the first stage is used to initialize the second stage, which follows the same procedure as the first stage, except that the decorrelation operation is omitted.}

\subsubsection{Stage 1} Let us first consider a noiseless case, i.e., $\tilde{\boldsymbol{y}} = \tilde{\boldsymbol{V}} \boldsymbol{s}$, {and for notation simplicity, let $\boldsymbol{w}$ denote any column of matrix $\mathbf{W}$.
The Lagrangian of \eqref{nonGaussianity} under the constraint $E\{|\boldsymbol{w}^{{\dag}}\tilde{\boldsymbol{y}}|^2\} = \|\boldsymbol{w}\|^2 = 1$ is} 
\begin{equation}\label{KTC}
    L(\boldsymbol{w}, \lambda) = E\{G(|\boldsymbol{w}^{{\dag}}\tilde{\boldsymbol{y}}|^2)\} - \lambda (E\{|\boldsymbol{w}^{{\dag}}\tilde{\boldsymbol{y}}|^2\} - 1),
\end{equation}
where $\lambda$ is the Lagrangian multiplier. Adopting an approximate Newton iteration method 
\cite{bingham2000fast}, we obtain an  estimate of  {a (locally) optimum value  of each column of matrix $\boldsymbol{W}$, i.e., $\boldsymbol{w}_{opt}$}, via the following iteration:
\begin{equation} \label{newton}
    \boldsymbol{w} := \boldsymbol{w} - \frac{E\{\tilde{\boldsymbol{y}}(\boldsymbol{w}^{\dag} \tilde{\boldsymbol{y}})^* g (|\boldsymbol{w}^{\dag} \tilde{\boldsymbol{y}}|^2) \} - \lambda \boldsymbol{w}}{E\{g (|\boldsymbol{w}^{\dag} \tilde{\boldsymbol{y}}|^2) + 2|\boldsymbol{w}^{\dag} \tilde{\boldsymbol{y}}|^2 g'(|\boldsymbol{w}^{\dag} \tilde{\boldsymbol{y}}|^2) \} - \lambda},
\end{equation}
where the notation ``:'' denotes the iterative update of $\boldsymbol{w}$;  $\lambda = E\{|\boldsymbol{w}^{\dag} \tilde{\boldsymbol{y}}|^2 g(|\boldsymbol{w}^{\dag} \tilde{\boldsymbol{y}}|^2) \}$; $g$ and $g'$ are the first-order and the second-order derivative of $G$, respectively. 
Subsequently, the mixing matrix $\boldsymbol{W}$, constrcuted based on all estimated  $\boldsymbol{\boldsymbol{w}}$'s, is decorrelated {as follows}:
\begin{equation}\label{decorrelation}
    \boldsymbol{W} := \boldsymbol{W} (\boldsymbol{W}^{\dag} \boldsymbol{W})^{-1/2}.
\end{equation}
The iteration stops when the error between successive estimates of $\boldsymbol{W}$ is below a  threshold, or, when a maximum number of iterations has been completed. {When the sample size is not large enough,} after the Newton iteration stops, the obtained $\boldsymbol{w}$ is only a coarse estimate of $\boldsymbol{w}_{opt}$ since the sample covariance of $\boldsymbol{s}$ will not be equal to the identity matrix, and hence $\tilde{\boldsymbol{V}} \tilde{\boldsymbol{V}}^{\dag} \ne \boldsymbol{I}$. After that point, trying to use decorrelation further cannot yield a better solution.
  %
Beyond that point,  we introduce a fine-tuning stage (stage 2) to refine this coarse estimate.

\subsubsection{Stage 2} We use gradient descent to update the weight vector in this stage, as it provides better control over step size adjustment compared to the Newton method. Using the gradient descent method to maximize  \eqref{KTC} we get the update
\begin{equation} \label{gradient}
    \boldsymbol{w} := \boldsymbol{w} + \mu (E\{\tilde{\boldsymbol{y}}(\boldsymbol{w}^{\dag} \tilde{\boldsymbol{y}})^* g (|\boldsymbol{w}^{\dag} \tilde{\boldsymbol{y}}|^2) \} - \lambda \boldsymbol{w}),
\end{equation}
where $\mu$ is the step size. In this stage we skip the decorrelation operation, and only normalize $\boldsymbol{w}$ after each iteration to satisfy the constraint $\|\boldsymbol{w}\|^2 = 1$. 
{After fine-tuning, we obtain a (locally) optimum demixing matrix, i.e., $\boldsymbol{W}_{opt}$.} 

{The second stage can be viewed as the FastICA  applied to each $\boldsymbol{w}_l$ separately, and as such its convergence is guaranteed \cite{Oja2001ICAbook, hyvarinen1998independent}.}
{
Per \cite{Oja2001ICAbook}, the condition for convergence for the 
 non-Gaussianity objective is
\begin{equation}\label{conforcon}
    E\{g(|s_l|^2) + |s_l|^2 g'(|s_l|^2) - |s_l|^2 g(|s_l|^2)\} < 0,
\end{equation}
where $s_l$ is the $l$-th element of $\boldsymbol{s}$. This 
{is satisfied by choosing appropriate contrast functions like the Kuortosis contrast $G(v) = \frac{1}{2} v^2$.}
 When \eqref{conforcon} is satisfied, the non-Gaussianity quantifier evaluated at $\boldsymbol{w}_l = c \tilde {\boldsymbol{v}}_l$, where $c$ is a complex scaler due to the ICA ambiguity, and $\tilde {\boldsymbol{v}}_l$ is the $l$-th column of $\tilde {\boldsymbol{V}}$, will be larger than  at adjacent points. Based on \eqref{conforcon}, by choosing a small step size, the weight error will keep decreasing and the Stage 2 will converge to a (locally) optimum.}

\begin{algorithm}[t]
    \caption{Proposed CMICA Algorithm}\label{cmica}
    \begin{algorithmic}[1]
        \State Preprocess the collected data $\boldsymbol{y}$ and initialize $\boldsymbol{W}$ randomly;
        \State \textbf{Start the first stage}:
        \ForEach {$i = 1,2,...,K$}
            \State Update $\boldsymbol{w}_i$ according to \eqref{newton};    
        \EndFor
        \State Decorrelate $\boldsymbol{W}$ according to \eqref{decorrelation};
        \State Repeat step 3 $\sim$ 6 until convergence or maximal iteration number;        
        \State \textbf{Start the second stage}:
        \ForEach {$i = 1,2,...,K$}
            \State Update $\boldsymbol{w}_i$ according to \eqref{gradient};
            \State Normalize $\boldsymbol{w}_i$ by $\boldsymbol{w}_i = \boldsymbol{w}_i/\|\boldsymbol{w}_i\|$;
        \EndFor
        \State Repeat step 9 $\sim$ 12 until convergence or maximal iteration number.
    \end{algorithmic}
\end{algorithm}

 The above two-stage ICA method can be applied to any data that satisfy the ICA model. When the source symbols are constant modulus, for example $M$-PSK, which are very common in communication systems, we insert $\boldsymbol{w}_{opt} = \boldsymbol{w}_l = c \tilde {\boldsymbol{v}}_l$ into \eqref{nonGaussianity} and obtain $J_G(\boldsymbol{w}_{opt}) = E\{G(|cs_l|^2)\}$. Computing the expectation using the collected samples, we can easily find that the sample estimate of non-Gaussianity at $\boldsymbol{w}_{opt} = c \tilde {\boldsymbol{v}}_l$ does not change with respect to the length of the sample set under the constant modulus property. Based on this invariance, we could use the above two-stage ICA to find $\boldsymbol{w}_{opt}$ with less data for constant modulus signals, which is shown in the simulations of Section \ref{numerical}.

{In the noisy case, i.e., $\tilde{\boldsymbol{y}} = \tilde{\boldsymbol{V}} \boldsymbol{s} + \boldsymbol{Az}$, the contribution of the colored Gaussian  
noise, $\boldsymbol{A z}$, is suppressed when using kurtosis as the non-Gaussianity metric. When using negentropy as the non-Gaussianity metric, the Gaussian moments-based method \cite{hyvarinen1999gaussian} can be applied to estimate the mixing matrix from noisy data.} We summarize the above two-stage CMICA algorithm in Algorithm 1. {The derivation details of \eqref{newton} and \eqref{gradient} are shown in  Appendix A.}

\subsection{Proposed KNN-Based Scheme for Resolving Ambiguities}


After obtaining $\boldsymbol{W}$,   $(\boldsymbol{WA})^{-1}$ is most probably  not equal to the actual mixing matrix $\boldsymbol{V}$ since there exist scaling and permutation ambiguities in $\boldsymbol{W}$ \cite{oja2000ica}, which would prevent the correct recovery of source symbols.
To resolve these ambiguities, we need to exploit prior knowledge about the TMA OFDM system. Assume that the eavesdropper knows (i) the OFDM specifics of the transmitted signals, like the number of subcarriers, $K$, (ii) the data modulation scheme, (iii) the Toeplitz structure of $\boldsymbol{V}$ and (iv) the rules (C1)-(C3) for implementing TMA. The rules (C1)-(C3) define a set of values for  TMA parameters, therefore, knowledge of the rules does not imply any knowledge of the specific parameters used by the TMA. 

\subsubsection{Resolving the amplitude scaling ambiguity} First, the scaling ambiguity arises because  $\boldsymbol{y} = \boldsymbol{V} \boldsymbol{s}$ can be written  as $\boldsymbol{y} = \sum_{i} (\alpha_i \boldsymbol{V}(:,i)) (s_i / \alpha_i)$ for an arbitrary  $\alpha _i$. The ICA algorithm cannot distinguish between $s_i$ and $s_i / \alpha_i$ since both of them have the same level of non-Gaussianity. Let us separate the scaling ambiguity  into amplitude and phase ambiguity. By knowing the transmit constellation, the eavesdropper  knows the amplitudes of the source signals. As a result, the eavesdropper knows how much the amplitude of the recovered signals is scaled, and can thus recover the amplitude scaling ambiguity. Before resolving the phase ambiguity, the eavesdropper will need to reorder the estimated mixing matrix correctly, which is discussed in the next subsection.

\subsubsection{Resolving the permutation ambiguity} The permutation ambiguity arises because $\boldsymbol{y}$ will not change if the elements of $\boldsymbol{s}$ are permuted and  the  columns of $\boldsymbol{V}$ are accordingly  permuted. Therefore, ICA cannot identify the recovered data symbols in the right order, i.e., it cannot match each demixed data symbol with the right  subcarrier. To solve this issue, we proceed as follows. We define $\boldsymbol{F}\buildrel \triangle \over = (\boldsymbol{WA})^{-1}$. In the absence of ambiguities, $\boldsymbol{F}$ would be equal to $\boldsymbol{V}$ forming a Toeplitz matrix. However, due to the presence of ambiguities, this is not the case.
 We propose reordering 
$\boldsymbol{F}$ by assessing how closely the reordered 
$\boldsymbol{F}$ approximates a Toeplitz matrix structure. Exhaustive reordering is impractical due to the 
$K!$  possible orderings, resulting in prohibitive computational cost.
The reordering process to be explained next 
 has complexity  $O(K^3)$.
{\begin{proposition}\label{prop1}
According to \eqref{eq5}, there are $K$ identical elements in the main diagonal of $\boldsymbol{V}$, and $K-1$, $K-2$, ..., $1$ identical elements in  other diagonals above or under the main diagonal. Since the values of $V_{-(K-1)},\cdots,V_0,\cdots,V_{K-1}$ are different according to \eqref{eq2}, the main diagonal  can solely determine the Toeplitz structure of  $\boldsymbol{V}$. 
Therefore, when 
$\boldsymbol{F}$ is reordered correctly, its main diagonal elements will be nearly identical\footnote{We should note that due to estimation errors within ICA, the estimated diagonal elements will not be exactly the same.}.
\end{proposition}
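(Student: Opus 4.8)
The plan is to separate the statement into a structural half, which merely unpacks the definition of a Toeplitz matrix, and an identifiability half, which is what makes the constant-main-diagonal test a valid reordering criterion. First I would make the structure explicit: by \eqref{eq5} the entry of $\boldsymbol{V}$ in row $i$, column $j$ is $\frac{1}{\sqrt{NK}} V_{i-j}$, so every entry on a fixed diagonal shares one value. The main diagonal ($i=j$) therefore carries $K$ copies of $V_0$, and the diagonal with offset $i-j=k$ carries $K-|k|$ copies of $V_k$, which reproduces the counts $K-1, K-2, \ldots, 1$ claimed in the statement. Since correctly reordering the columns of $\boldsymbol{F}$ restores $\boldsymbol{V}$, the restored main diagonal collapses to the single value $V_0$, giving the ``Therefore'' conclusion directly.

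The substantive step, and the one I would foreground, is the claim that the main diagonal \emph{alone} pins down the structure, i.e.\ that no wrong column ordering can also yield a constant main diagonal. I would encode a candidate ordering as a permutation $\pi$ with permutation matrix $\boldsymbol{R}$, so that every ordering of the columns we could land on is of the form $\boldsymbol{V}\boldsymbol{R}$, and then read off $[\boldsymbol{V}\boldsymbol{R}]_{i,i} = \frac{1}{\sqrt{NK}} V_{i-\pi(i)}$. Demanding that this be the same for all $i$, together with the distinctness of the $V_m$, forces the offset $i-\pi(i)$ to equal a single constant $c$, hence $\pi(i)=i-c$. Because $\pi$ must be a bijection of $\{1,\ldots,K\}$ onto itself and a plain (non-circulant) Toeplitz matrix offers no wrap-around, testing $i=1$ and $i=K$ forces $c=0$, so $\pi$ is the identity and the ordering is the true one. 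This converts the ``constant main diagonal'' test into an exact identifiability criterion.

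The step I expect to be the main obstacle is justifying the distinctness hypothesis $V_m \neq V_{m'}$ for $m\neq m'$, on which both the preceding argument and the ``main diagonal solely determines the structure'' claim rest. From \eqref{eq2}--\eqref{sincEq}, each $V_m$ is a finite sum $\sum_{n=1}^N a_{mn} e^{j(n-1)\pi(\cos\theta-\cos\theta_0)}$ of sinc-weighted phasors evaluated at the eavesdropper direction $\theta_e$ under rules (C1)--(C3); I would argue distinctness holds generically while flagging that it can degenerate for special geometries, which dovetails with the later identification of secure (rank-deficient or non-unique-pattern) configurations. Two further points I would record: since the per-column phases are not resolved until after reordering, the criterion and the distinctness requirement are naturally applied to the magnitudes $|V_m|$, which is invariant to that residual ambiguity; and since $\boldsymbol{F}=(\boldsymbol{W}\boldsymbol{A})^{-1}$ is only an ICA estimate, the restored diagonal entries are perturbed, so exact equality relaxes to the ``nearly identical'' wording, matching the footnote.
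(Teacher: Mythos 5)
Your proposal is correct, and its first half coincides with the paper's own argument, which likewise just reads the multiplicities $K, K-1,\ldots,1$ off the Toeplitz form \eqref{eq5} and invokes distinctness of the $V_m$ to conclude that the main diagonal alone fixes the structure. Where you genuinely go beyond the paper is the identifiability step: the paper merely \emph{asserts} that the main diagonal "can solely determine the Toeplitz structure," whereas you prove the converse that the algorithm actually relies on — writing any candidate ordering as $\boldsymbol{V}\boldsymbol{R}$ with diagonal entries $\frac{1}{\sqrt{NK}}V_{i-\pi(i)}$, forcing $i-\pi(i)\equiv c$ by distinctness, and eliminating $c\neq 0$ via bijectivity of $\pi$ on $\{1,\ldots,K\}$ with no wrap-around. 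That lemma is what licenses using constancy of the diagonal as a ranking criterion, and the paper does not supply it. One sharpening of the caveat you flag: the distinctness hypothesis fails not merely for "special geometries" but systematically, and for a non-geometric reason — by the paper's own \eqref{zero}, the sinc factor in \eqref{sincEq} forces $V_m=0$ whenever $m\Delta\tau$ is a nonzero integer, so for $K>N$ (e.g., the paper's simulated $K=16$, $N=7$) several off-diagonal values coincide at zero. Consequently a wrong permutation such as pairing $i\leftrightarrow i\pm N$ also produces a \emph{constant} (all-zero) diagonal, so pure constancy admits false positives and your forcing argument, as stated, does not go through under full generality. It is rescued by the weaker hypothesis that actually suffices, $V_m\neq V_0$ for $m\neq 0$ with $V_0\neq 0$ guaranteed by (C2)--(C3): constancy at a \emph{nonzero} value still forces the identity permutation, while the degenerate all-zero orderings are screened out in practice by the normalized standard deviation (which blows up for a near-zero-mean diagonal) and by the KNN retention of $k$ candidates for the downstream checks \eqref{zero}--\eqref{Vtau}. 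Your remarks that the criterion is applied to magnitudes (invariant to the unresolved per-column phases) and that ICA errors relax equality to "nearly identical" both match the paper's footnote and Algorithm 2.
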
}

\begin{figure}[t]
\centerline{\includegraphics[width=3.2in]{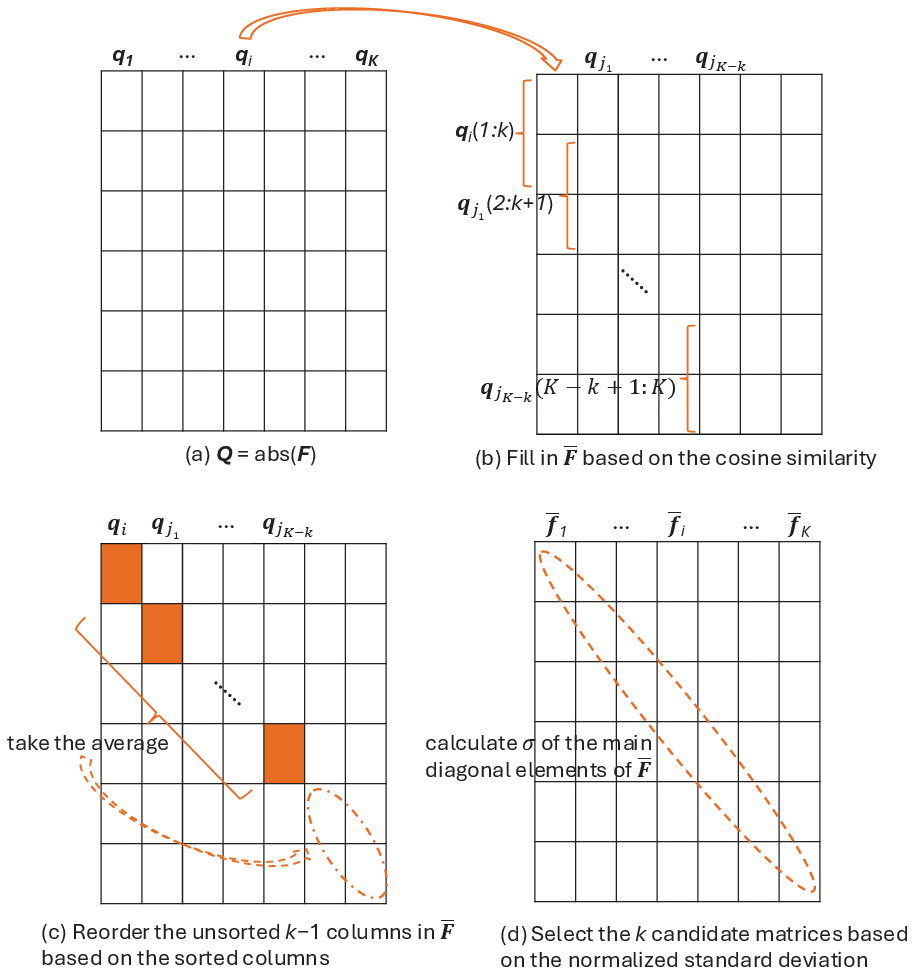}}
\caption{Illustration of the proposed reordering process.}\label{ill_reorder}
\end{figure}

Based on the Proposition \ref{prop1}, we propose to focus only on the main diagonal elements to reorder $\boldsymbol{F}$ and achieve low computational complexity. Specifically, we first calculate the amplitude of each element in $\boldsymbol{F}$ and get a new matrix $\boldsymbol{Q}$, the $i$th column of which is denoted by $\boldsymbol{q}_i$. Then we select the first $k$ elements of $\boldsymbol{q}_i$, i.e., $\boldsymbol{q}_i(1:k)$, as the reference vector, and put $\boldsymbol{q}_i$ in the first column of an empty matrix $\bar{\boldsymbol{F}}$, which is used to store the reordered $\boldsymbol{F}$. Next, we compare $\boldsymbol{q}_{j \ne i}(2:k+1)$ with the selected reference vector based on the cosine similarity and {put the most similar vector}  in the second column of $\bar{\boldsymbol{F}}$. {Cosine similarity is quantified by
$
    \frac{\boldsymbol{q}_i^T(1:k) \boldsymbol{q}_{j \ne i}(2:k+1)}{\|\boldsymbol{q}_i(1:k)\| \|\boldsymbol{q}_{j \ne i}(2:k+1)\|}.
$
}
In turn, we obtain $K-k+1$ reordered columns in $\bar{\boldsymbol{F}}$.
For the remaining unsorted $k-1$ columns in $\boldsymbol{Q}$, we take the average of the main diagonal elements of the matrix formed by those $K-k+1$ reordered columns as the reference, and put these unsorted $k-1$ columns in the  corresponding placements according to the fact that the main diagonal elements of the mixing matrix should be the same.
%
%
For each of the  $K$  reference vectors,  $\boldsymbol{q}_i(1:k), i=1,\cdots,K$, we obtain a matrix $\bar{\boldsymbol{F}}$.
Out of them, we select the $k$  matrices with the least normalized standard deviation of their main diagonal elements, and let the phase ambiguity resolving approach, to be described next, find the most plausible ${\boldsymbol{F}}$, {which are the principles of KNN to resolve the ICA ambiguities. The reordering process is illustrated  in Fig. \ref{ill_reorder}.
\begin{remark}
    Different from a general KNN, the proximity in our proposed KNN-based ambiguity resolving algorithm stems from an artificial Toeplitz similarity, which is measured by the normalized standard deviation of the main diagonal elements and the cosine similarity between the reference and candidate vectors. The use of KNN improves the robustness of the algorithm by retaining $k$-candidate solutions for further evaluation, which mitigates the impact of errors in ICA estimation and noise.
\end{remark}
}


%


\subsubsection{Resolving phase scaling ambiguity} 
At this step, $\boldsymbol{F}$ and  $\boldsymbol{s}$ denote the results of the above described processes that  resolved the  
  amplitude scaling amplitude and permutation ambiguities.
   The phase ambiguity is introduced when $\alpha _i$ is strictly complex. Let us consider  $M$-PSK modulated source symbols\footnote{The extension to QAM modulation is straightforward as QAM is a combination of several kinds of PSK modulation.}. Each source signal can have up to $M$ phase transformations,  hence there are $M$ possible phases for each column of $\boldsymbol{F}$, and  in total $M^K$ phase possibilities for $\boldsymbol{F}$. 
The Toeplitz constraint can reduce the $M^K$ possibilities to $M$; this is because  the phases of the diagonal elements of $\boldsymbol{F}$ must be the same. We define these $M$ possibilities for $\boldsymbol{F}$ as $\boldsymbol{F}_1, \boldsymbol{F}_2, ..., \boldsymbol{F}_M$. {The core principle of resolving the phase ambiguity is shown in the Proposition \ref{prop2}.
\begin{proposition}\label{prop2}
   The phase uncertainty can be eliminated only if there exist a set of TMA parameters, i.e, $N, \Delta \tau, \{\tau _n^o\}_{n=1,2,...,N}, \theta _e, \theta _0$, that correspond to a unique matrix in the set $\{\boldsymbol{F}_u, {u=1,2,...,M}\}$. The feasible TMA parameters must satisfy the rules defined in (C1)-(C3) in Section II and $N \in \mathbb{N}^+$, $\theta _e, \theta _0 \in (0, \pi)$.
\end{proposition}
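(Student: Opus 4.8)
The plan is to read Proposition 2 as an identifiability statement and prove it by making the dependence of the mixing matrix on the TMA parameters fully explicit, then characterizing which of the $M$ phase candidates $\boldsymbol{F}_1,\dots,\boldsymbol{F}_M$ can be reproduced by a feasible parameter tuple. First I would substitute conditions (C1)--(C3)—in particular $\Delta\tau_n=\Delta\tau$ for all $n$—into \eqref{sincEq} and \eqref{eq2} to obtain the factored form
\[
V_m=\Delta\tau\,\sinc(m\pi\Delta\tau)\,e^{-jm\pi\Delta\tau}\sum_{n=1}^{N} e^{-j2m\pi\tau_n^o}\,e^{j(n-1)\psi},\qquad \psi\triangleq\pi(\cos\theta_e-\cos\theta_0).
\]
This exposes the essential structure: the magnitude of $V_m$ is fixed by $\Delta\tau$ and the array sum, while the phase of $V_m$ encodes the discrete switch-on instants $\{\tau_n^o\}$ and the angles $\theta_e,\theta_0$ in a tightly constrained, $m$-dependent manner. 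I would also recall from the preceding discussion that the Toeplitz constraint forces the diagonal phases to agree, so the $M$ candidates differ from the true $\boldsymbol{V}$—and from one another—only by a common phase rotation applied to all entries.

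Next I would split the claim into an existence part and a uniqueness part. For existence, the true TMA parameters satisfy (C1)--(C3) and generate the true $\boldsymbol{V}$, which coincides (up to the estimation error noted in Proposition \ref{prop1}) with one of the candidates; hence the subset of candidates admitting a feasible preimage under the parametrization above is nonempty. For the necessity (the ``only if'') direction I would argue by contraposition: the eavesdropper's only admissible prior is the rule set (C1)--(C3) together with $N\in\mathbb{N}^+$ and $\theta_e,\theta_0\in(0,\pi)$, so any two candidates that are each generated by some feasible tuple are, from the eavesdropper's standpoint, equally valid explanations of the same scrambled data and therefore indistinguishable. Consequently the phase can be pinned down exactly when precisely one candidate admits a feasible preimage, which is the stated condition.

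The core technical step, and the main obstacle, is to analyze when a phase-rotated copy of $\boldsymbol{V}$ can again be written in the parametrized form—equivalently, whether the map from feasible tuples $(N,\Delta\tau,\{\tau_n^o\},\theta_e,\theta_0)$ to the per-harmonic phase pattern $\{\arg V_m\}_m$ is injective modulo the $M$ admissible rotations. Here I would exploit that a global rotation $e^{j\beta}$ leaves the $m=0$ entry's structure untouched while each $m\neq0$ entry carries an $m$-dependent phase built from $e^{-j2m\pi\tau_n^o}$ and $e^{j(n-1)\psi}$; a single constant $\beta$ cannot in general be absorbed by any reassignment of $\tau_n^o\in\{(h-1)/N\}$ and any admissible angles, because the number of independent per-harmonic phase constraints exceeds the discrete degrees of freedom available. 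I would then show that a matching rotation can exist only for special parameter coincidences—exactly the non-uniqueness situation flagged in the paper's security analysis—so that generically the feasible preimage is unique and the phase is resolved. Making this degree-of-freedom count rigorous, and pinning down the measure-zero coincidence set where a second feasible candidate appears, is where most of the work lies.
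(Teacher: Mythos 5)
Your setup is sound: the factored form of $V_m$ under (C1)--(C3) is correct, you correctly observe that the Toeplitz constraint collapses the per-column phase freedom to a single global rotation so that the $M$ candidates are $e^{j2\pi u/M}$-rotated copies of one another, and the existence direction (the true parameters furnish a feasible preimage for one candidate) together with the near-definitional ``only if'' direction are fine. The genuine gap is the step you yourself flag as ``where most of the work lies'': you never actually show that at most one candidate admits a feasible preimage, and the degree-of-freedom heuristic you offer in its place is not how the identification works and would not close the gap. A single global rotation does not have to be ``absorbed'' across all harmonics $m$ simultaneously; the paper's mechanism is far more economical and concrete. First $N$ is recovered from the zero pattern $V_m=0$ for $m=\pm N,\pm 2N,\dots$ in \eqref{zero}; then the closed form \eqref{eqV0} shows $\angle V_0 = \frac{N-1}{2}\pi\phi$ is a function of $N$ and $\phi$ alone, so when $\phi$ is known (or estimated via \eqref{eqLambda} from the real-to-imaginary ratio of the diagonal), the single scalar test \eqref{angleV0} on the diagonal entry, combined with the exact $2\pi/M$ phase spacing between $\boldsymbol{F}_u(1,1)$ and $\boldsymbol{F}_{u+1}(1,1)$, admits at most one matching $u$; feasibility of $\Delta\tau$ is then checked via \eqref{amVo}, and any residual tie is broken by the exhaustive discrete search over $\{\tau_n^o\}$ in \eqref{Vtau}. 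Your counting argument (``per-harmonic phase constraints exceed the discrete degrees of freedom'') is neither proved nor true in general--- the paper's own security analysis exhibits parameter tuples where the constraints \emph{are} satisfied by multiple distinct $(\Delta\tau,\{\tau_n^o\},\boldsymbol{s})$.

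A second, related problem is your closing claim that non-uniqueness is confined to a ``measure-zero coincidence set,'' so that ``generically the feasible preimage is unique.'' The proposition asserts no such genericity; it is a conditional identifiability statement, precisely because uniqueness can fail. Moreover, the failure modes are not well captured by a measure argument over continuous parameters: while $\phi=\pm 2/N,\pm 4/N,\dots$ is measure-zero in $\phi$, the switch-on instants $\tau_n^o$ range over the discrete set $\{(h-1)/N\}$, and the paper's explicit counterexample ($\phi=2/N$, $N=K=4$, BPSK, two distinct tuples $(\Delta\tau,\{\tau_n^o\},\boldsymbol{s})$ producing the same $\boldsymbol{y}$) arises from periodicity and parity coincidences of the $\sin$ and $\exp$ terms in \eqref{eq2}, not from a thin set in a continuum. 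So even if you made your count rigorous, you would be proving a different (and stronger, generic) statement than the one asserted, while leaving unestablished the operational content the paper supplies: that the system \eqref{zero}, \eqref{angleV0}, \eqref{amVo}, and if needed \eqref{eqLambda}, \eqref{Vtau}, is what certifies uniqueness of the candidate and hence eliminates the phase uncertainty.
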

}

Based on the above proposition, we proceed as follows. From \eqref{sincEq} and \eqref{eq2} we have
\begin{equation} \label{zero}
    V_m = 0, m= \pm N, \pm 2N, \pm 3N, \cdots
\end{equation}
This is because the $\sinc$ term will be 0 when $m= \pm N, \pm 2N, \pm 3N, \cdots$ We can utilize \eqref{zero} to find the value of $N$ from $\boldsymbol{F}$. Then, let $\phi = \cos \theta _e - \cos \theta _0$. From \eqref{eq2} we obtain
\begin{equation}\label{eqV0}
\begin{split}
    V_0 &= \Delta \tau \sum \limits_{n=1}^{N} e^{j(n-1) \pi \phi} = \Delta \tau \frac{1-e^{jN\pi \phi}}{1-e^{j\pi \phi}} \\
    &= \Delta \tau \frac{\sin(\frac{N}{2} \pi \phi)}{\sin(\frac{1}{2} \pi \phi)} e^{j\frac{(N-1)}{2} \pi \phi},
\end{split}
\end{equation}
where $\phi \ne \pm 2/N, \pm 4/N, \pm 6/N, \cdots$ On assuming that $\phi$ is known, which can be obtained via direction finding techniques by the eavesdropper, we will know the actual phase of $V_0$ from \eqref{eqV0}, denoted as $\angle V_0$, since it is determined only by $N$ and $\phi$. Next, for each possible $\boldsymbol{F}_u$ 
we check whether the following holds: 
\begin{equation} \label{angleV0}
    \angle V_0 = \angle \boldsymbol{F}_u(1,1),
\end{equation}
where $\boldsymbol{F}_u(1,1)$ is the main diagonal element of $\boldsymbol{F}_u$. Meanwhile, we need to check whether there exists $\Delta \tau$ constrained by the rules (C1)-(C3) that satisfies
\begin{equation} \label{amVo}
\begin{split}
    &|V_0| = |\boldsymbol{F}_u(1,1)|.
\end{split}
\end{equation}
By \eqref{zero}, \eqref{angleV0} and \eqref{amVo}, we can find the solutions of $N, \Delta \tau, \phi$ for only one of $\{\boldsymbol{F}_u , {u=1,2,...,M}\}$ since there is a fixed phase difference, i.e., $\frac{2\pi}{M}$, between $\boldsymbol{F}_u(1,1)$ and $\boldsymbol{F}_{u+1}(1,1)$. Therefore, the phase ambiguity is resolved when $\phi$ is known.

\begin{algorithm}[t]
    \caption{KNN-based Ambiguity Resolving Algorithm}\label{ARA}
    \begin{algorithmic}[1]
        \State Set $S = \{1,2,...,K\}$;
        \ForEach {$i = 1,2,...,K$}
            \State Take $\boldsymbol{q}_i(1:k)$ as the reference vector and put $\boldsymbol{q}_i$ in the first column of $\bar{\boldsymbol{F}}$;
            \State Remove $i$ from $S$;
            \ForEach {$d = 2,3,...,K-k+1$}
                \State Find $\boldsymbol{q}_j(d:d+k-1)$, $j \in S$ that is the closest to $\boldsymbol{q}_i(1:k)$ based on the cosine similarity;
                \State Put $\boldsymbol{q}_{j}$ in the $d$-th column of $\bar{\boldsymbol{F}}$;
                \State Remove $j$ from $S$;
            \EndFor    
            \State Form a matrix $\boldsymbol{B}_i$ by the above $K-k+1$ reordered columns and take the average of main diagonal of $\boldsymbol{B}_i$ as $b_i$;
            \State Find the remaining $k-1$ unsorted columns based on $b_i$;
            \State Obtain a reordered $\bar{\boldsymbol{F}}_i$ and then take the normalized standard deviation of its main diagonal as $\sigma _i$
        \EndFor
        \State Let $\boldsymbol{\sigma} = [\sigma _1, ..., \sigma _K]$ and select the first $k$ smallest elements in $\boldsymbol{\sigma}$, by which we obtain $k$ reordered $\bar{\boldsymbol{F}}$ and accordingly $k$ reordered $\boldsymbol{F}$;
        \ForEach {reordered $\boldsymbol{F}$}
            \State Obtain $\{\boldsymbol{F}_u, u=1,...,M\}$ according to the transmission constellation and the Toeplitz constraint;
            \ForEach {$\boldsymbol{F}_u$}
                \State Estimate $N$ according to \eqref{zero};
                \If{$\phi$ is known}
                    \State Check if there are solutions to \eqref{angleV0} and \eqref{amVo};
                \Else
                    \State Check if there are solutions to \eqref{eqLambda} and \eqref{angleV0}, \eqref{amVo}, \eqref{Vtau};
                \EndIf
            \EndFor
        \EndFor
    \end{algorithmic}
\end{algorithm}

When $\phi$ is not known, we can proceed as follows. The ratio of the real part and imaginary part of $V_0$, denoted as $\gamma$, is 
\begin{equation}\label{eqLambda}
    \gamma = \frac{1}{\tan{\frac{N-1}{2} \pi \phi}}.
\end{equation}
After estimating $N$ from \eqref{zero}, we can estimate all possible values of $\phi$, $\phi \in (-2,2)$ from \eqref{eqLambda} for each $\{\boldsymbol{F}_u, u=1,...,M\}$. Then, for each found $\phi$, we check if there are solutions to the equation \eqref{angleV0} and \eqref{amVo}. After that, we can  find the feasible values of $N, \Delta \tau$ and $\phi$ for at least one of $\{\boldsymbol{F}_u, u=1,...,M\}$; if two or more $\boldsymbol{F}$ are found, we further check whether there exists $\{\tau _n^o\}_{n=1,2,...,N}$ that satisfies
\begin{equation} \label{Vtau}
    \boldsymbol{V}(\tau _n^o, \Delta \tau, N, \phi) = \boldsymbol{F},
\end{equation}
where $\{\tau _n^o\}_{n=1,2,...,N}$ are subject to (C1)-(C3), and $N, \Delta \tau$ and $\phi$ are the found values by solving \eqref{zero}, \eqref{angleV0} and \eqref{amVo}. Since there are constraints on the feasible TMA parameters as stated above, it is possible to find only one $\boldsymbol{F}$ among $\{\boldsymbol{F}_u, u=1,...,M\}$ by solving the equation \eqref{zero}, \eqref{angleV0}, \eqref{amVo}, \eqref{eqLambda} and \eqref{Vtau} simultaneously. This means  that the eavesdropper could still eliminate the phase uncertain even when $\phi$ is not known. {Considering that $\{\tau _n^o\}_{n=1,2,...,N}$ are discrete and there are no analytical solutions to \eqref{Vtau}, we adopt an exhaustive search method here to find $\{\tau _n^o\}_{n=1,2,...,N}$ with the complexity $O(N!)$, which is much higher than that of when $\phi$ are known. So the eavesdropper can use direction finding techniques to estimate $\phi$ first and then resolve the phase ambiguity when $N$ is large.
\begin{remark}
    Another difference from the traditional KNN is that our proposed KNN-based ambiguity resolving algorithm uses equations \eqref{zero}-\eqref{Vtau} to determine the most plausible solution by testing candidates against physical and structural constraints specific to the TMA system. This process leverages prior knowledge about the system, such as the rules (C1)-(C3) and the Toeplitz constraint of $\boldsymbol{V}$, rather than relying on the majority class of the $k$-nearest candidates.
\end{remark}
}
We summarize the whole ambiguity resolving algorithm in Algorithm 2.

\section{Defending the TMA scrambling}


\subsection{Conditions for A Secure TMA}

There are two scenarios where the TMA OFDM system is sufficiently secure: when 
$\boldsymbol{V}$ is rank-deficient and when there is non-uniqueness in the ON-OFF switching pattern. In the former case, there are multiple solutions for 
 $\boldsymbol{s}$ when solving the problem of $\boldsymbol{y} = \boldsymbol{V} \boldsymbol{s}$.
 The non-uniqueness means that there are multiple ON-OFF switching patterns, i.e., multiple groups of 
 $\Delta \tau $ and $\{\tau _n^o\}_{n=1,2,...,N}$ when trying to defy the scrambling.

\begin{figure}[t]
\centering
\subfigure[]{
\centering
\includegraphics[width=2.9in]{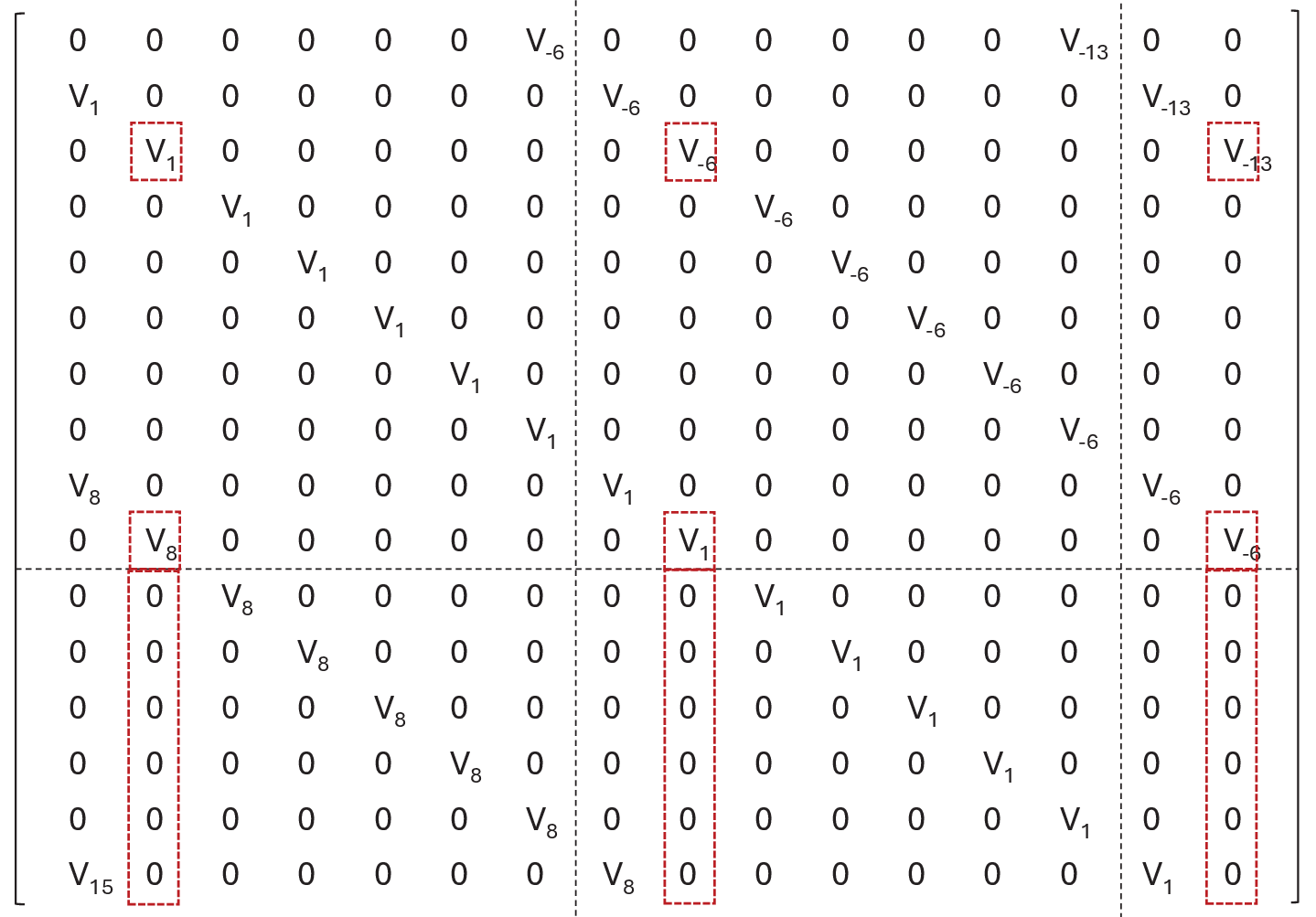}
}

\subfigure[]{
\centering
\includegraphics[width=3.1in]{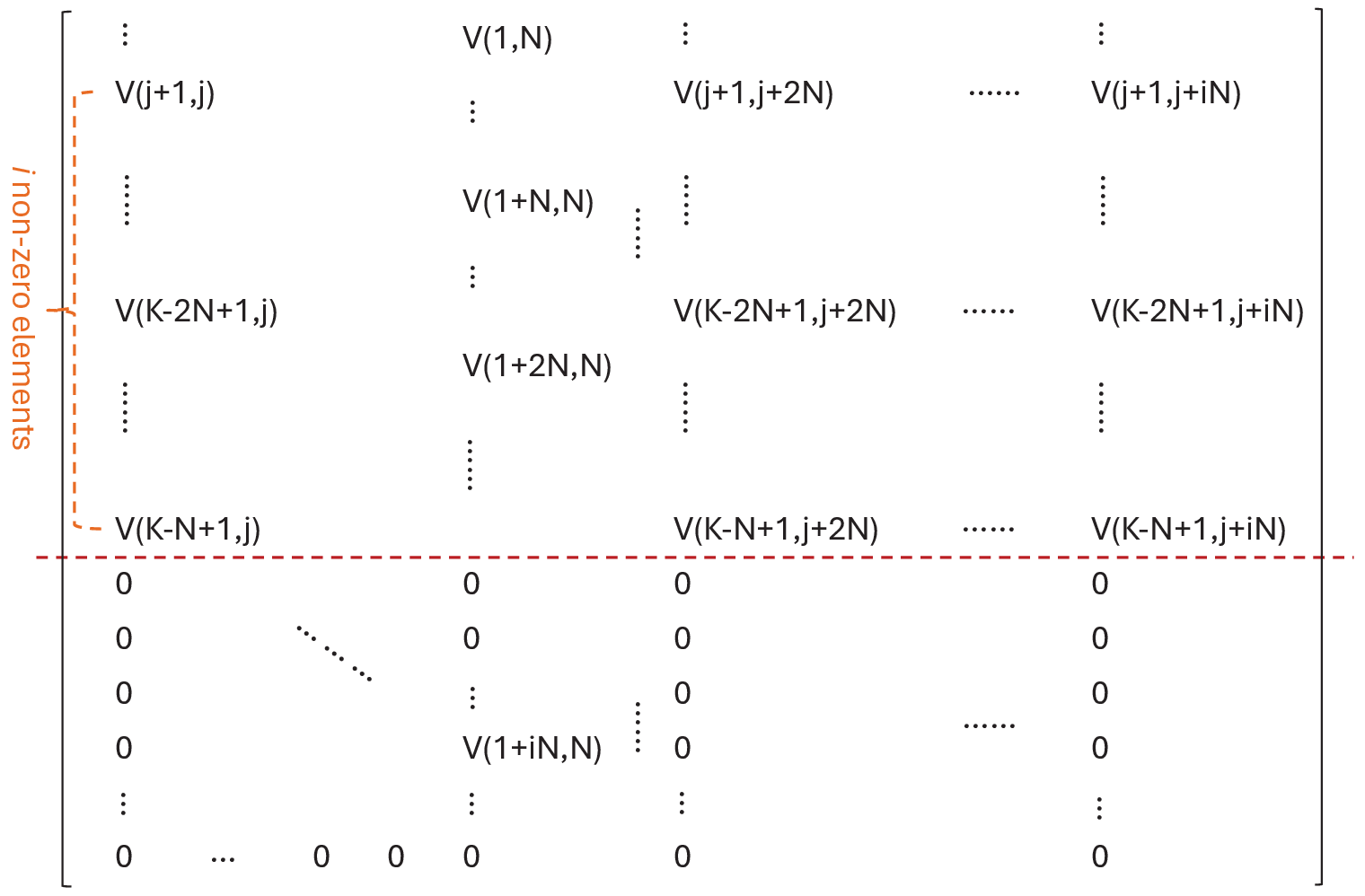}
}
\centering
\caption{Illustration of $\boldsymbol{V}$ when $\phi= \pm 2/N, \pm 4/N, \pm 6/N, \cdots$ and $\tau _n^o = (n-1)/N$: (a) an specific example with $K=16$, $N=7$; (b) a general diagram used for proving Lemma \ref{le1}.}
\label{proofNR}
\end{figure}

When $\phi = \cos \theta _e - \cos \theta _{\rm 0} = \pm 2/N, \pm 4/N, \pm 6/N, \cdots$, the aforementioned two  cases become feasible. Specifically, on using  $\phi = \pm 2/N, \pm 4/N, \pm 6/N, \cdots$ and $\tau _n^o = (n-1)/N$ in \eqref{eq2}, we get
\begin{equation} \label{rank}
V_m 
    \begin{dcases}
     \ne 0  & m = 1 + iN, \, i=0, \pm 1, \pm 2, \cdots \\
      = 0  & \text{otherwise}
    \end{dcases}
\end{equation}
Based on the above observations, we obtain the following corollary.
\begin{corollary}\label{coro1}
The distance between the indices corresponding to the nearest two non-zero elements in the same row or column of $\boldsymbol{V}$ is $N$.
\end{corollary}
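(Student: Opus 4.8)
The plan is to read the claim directly off the sparsity pattern of $V_m$ in \eqref{rank} together with the Toeplitz layout of $\boldsymbol{V}$ in \eqref{eq5}. First I would record the indexing fact underlying everything: in \eqref{eq5} the entry in row $i$, column $j$ equals the scaled coefficient $\frac{1}{\sqrt{NK}}V_{i-j}$, so every diagonal is labelled by the single integer $m=i-j$, ranging over $-(K-1),\dots,K-1$. A one-step move along a fixed row (increasing $j$) decreases $m$ by one, and a one-step move down a fixed column (increasing $i$) increases $m$ by one; in either case a unit step in the index changes the subscript $m$ by exactly one. Hence, along a row or a column, the index distance between two entries equals the difference of their subscripts $m$.

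Next I would invoke \eqref{rank}, which asserts that under $\phi=\pm 2/N,\pm 4/N,\dots$ and $\tau_n^o=(n-1)/N$ the coefficient $V_m$ is non-zero precisely when $m=1+kN$ for some integer $k$, and zero otherwise. The set $\{\,1+kN : k\in\mathbb{Z}\,\}$ is an arithmetic progression of common difference $N$, so any two consecutive non-zero subscripts differ by exactly $N$. Combining this with the previous step: within a fixed row the non-zero entries occupy the columns $j$ with $i-j\equiv 1 \pmod N$, and consecutive such columns are separated by $N$; the argument with the roles of $i$ and $j$ interchanged gives the identical spacing $N$ down any fixed column. This is exactly the asserted distance.

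Since \eqref{rank} is already established, I expect no genuine obstacle here; the only care needed is bookkeeping. I would pin down the precise row/column convention of \eqref{eq5} so that the map $(i,j)\mapsto V_{i-j}$ carries the correct sign, and I would remark that, because $K$ is finite, some rows or columns contain a single non-zero entry (whenever the neighbouring index $j\pm N$ or $i\pm N$ leaves $[1,K]$); for such a line the statement about a ``nearest pair'' is vacuous and, crucially, no spacing smaller than $N$ can ever occur. I would therefore keep the write-up short and let the arithmetic-progression structure of the non-zero subscripts do the work.
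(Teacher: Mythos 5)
Your proposal is correct and takes essentially the same route as the paper, which presents the corollary as an immediate consequence of the sparsity pattern in \eqref{rank} together with the Toeplitz structure in \eqref{eq5}; you merely make the implicit bookkeeping explicit (entry $(i,j)$ equals $\tfrac{1}{\sqrt{NK}}V_{i-j}$, unit steps along a row or column change the subscript $m$ by one, and the non-zero subscripts $\{1+kN : k\in\mathbb{Z}\}$ form an arithmetic progression of common difference $N$). Your closing remark that some boundary rows or columns contain a single non-zero entry, so the ``nearest pair'' claim is vacuous there while no spacing below $N$ can occur, is a careful detail the paper leaves unstated.
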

\noindent We show an example of this special mixing matrix for $N = 7, K=16$ in Fig. \ref{proofNR}(a). For this kind of mixing matrix, we can prove the following lemma:
\begin{lemma}\label{le1}
    $\boldsymbol{V}$, as defined in \eqref{eq5}, is not full-rank when $iN \ne K, \forall i \in \mathbb{N^+}$.
\end{lemma}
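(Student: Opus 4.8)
The plan is to prove rank-deficiency \emph{directly from the zero pattern} of $\boldsymbol{V}$, without ever using the precise values of its nonzero entries. By \eqref{rank} and Corollary~\ref{coro1}, the $(p,q)$ entry of $\boldsymbol{V}$ (which equals $V_{p-q}$ up to the common scaling $1/\sqrt{NK}$) is nonzero only when $p-q\equiv 1\pmod N$. First I would group the index set $\{1,\dots,K\}$ into the $N$ residue classes modulo $N$, writing $n_r$ for the number of indices in class $r$, $r=0,\dots,N-1$, so that $\sum_{r} n_r = K$. The key structural observation is that every column of $\boldsymbol{V}$ whose index lies in class $r$ has all of its nonzero entries confined to the rows whose index lies in class $r+1\pmod N$ (class $N-1$ wrapping to class $0$). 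Equivalently, after reordering rows and columns by residue, $\boldsymbol{V}$ acquires a cyclic block structure in which the only nonzero block sends class-$r$ columns into class-$(r{+}1)$ rows.

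Next I would exploit the hypothesis $iN\ne K$ for all $i\in\mathbb{N}^+$, i.e.\ $N\nmid K$. When $N\mid K$, all classes have equal size $K/N$; when $N\nmid K$, the sizes $n_r$ cannot all be equal. Since the cyclic differences $n_r-n_{r+1\bmod N}$ telescope to zero around the cycle (the two sums $\sum_r n_r$ and $\sum_r n_{r+1\bmod N}$ are merely reindexings of one another), the fact that they are not all equal forces at least one residue $r$ with $n_r>n_{r+1\bmod N}$. I would fix such an $r$.

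The conclusion then follows by a dimension count. The $n_r$ columns of $\boldsymbol{V}$ indexed by class $r$ all live inside the coordinate subspace spanned by the standard basis vectors indexed by class $r+1$, a subspace of dimension $n_{r+1\bmod N}<n_r$. Having more vectors than the dimension of the space that contains them, these columns are linearly dependent, so $\boldsymbol{V}$ possesses a nontrivial null vector and hence cannot be full rank. I would stress that this argument is insensitive to the actual magnitudes of the nonzero entries $V_{1+iN}$, which is reassuring in the ICA/estimation context of the paper. For completeness one may phrase the same bound quantitatively as $\operatorname{rank}(\boldsymbol{V})=\sum_{r}\operatorname{rank}(\boldsymbol{B}_r)\le \sum_{r}\min(n_r,n_{r+1\bmod N})<K$, where $\boldsymbol{B}_r$ is the class-$r$-to-class-$(r{+}1)$ block.

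The only genuinely delicate step is the second one: correctly identifying that the support relation is the cyclic shift $r\mapsto r+1\pmod N$, and then establishing the pigeonhole/telescoping fact that unequal cyclic sizes must contain a strict descent $n_r>n_{r+1}$. Everything else is routine bookkeeping of residue classes followed by an elementary linear-dependence count; notably, the argument is symmetric under reversing the direction of the shift, so even an inadvertent sign error in the modular relation would not affect the conclusion.
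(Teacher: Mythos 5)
Your proof is correct, and it reaches the conclusion by a genuinely different (and somewhat more general) route than the paper, even though both arguments share the same combinatorial kernel. The paper writes $K=iN+j$ with $1\le j\le N-1$, splits $\boldsymbol{V}$ into $\boldsymbol{V}^+$ and $\boldsymbol{V}^-$, and chases indices to exhibit $i+1$ specific columns, namely $j,\,j+N,\,\dots,\,j+iN=K$, whose nonzero entries are all confined to the same $i$ rows $j+1,\,N+j+1,\,\dots,\,(i-1)N+j+1$; it then concludes $\det\boldsymbol{V}=0$ via the Leibniz expansion, since any permutation would have to assign $i+1$ distinct rows to columns supported on only $i$ rows. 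You instead reorder indices by residue class modulo $N$ (valid, since by \eqref{rank} the $(p,q)$ entry vanishes unless $p-q\equiv 1\pmod N$), observe the resulting cyclic block structure, and obtain the strict descent $n_r>n_{r+1\bmod N}$ by the telescoping/pigeonhole argument (the hypothesis $iN\ne K$, i.e.\ $N\nmid K$, is exactly what forces unequal class sizes), finishing with a plain dimension count instead of a determinant computation. In fact the two witnesses coincide: your descent occurs at $r=j$, and your class-$j$ columns are precisely the paper's columns $j,\,j+N,\,\dots,\,K$, supported on the $i$ rows of class $j+1$. What your route buys: it makes explicit that only the zero pattern matters (robust to the actual values $V_{1+iN}$), it dispenses with the Leibniz formula, and it yields the sharper quantitative bound $\operatorname{rank}(\boldsymbol{V})=\sum_{r}\operatorname{rank}(\boldsymbol{B}_r)\le\sum_{r}\min\left(n_r,\,n_{r+1\bmod N}\right)<K$, which in particular recovers the lemma and would extend to rectangular variants. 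What the paper's route buys: a concrete, self-contained index verification that maps directly onto the illustration in Fig.~\ref{proofNR}(b) without the bookkeeping of permutation-invariance of rank. All of your delicate steps check out: the support relation $r\mapsto r+1\pmod N$ follows from \eqref{rank} and Corollary~\ref{coro1}, and the claim that unequal cyclic class sizes force a strict descent is sound because the differences $n_r-n_{r+1\bmod N}$ sum to zero around the cycle and cannot all vanish when $N\nmid K$.
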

\begin{proof}

Suppose $iN \ne K, \forall i \in \mathbb{N^+}$. Let us set $K = iN + j$, where $j$ is the reminder, $1 \le j \le N-1$.
Let us divide the matrix $\boldsymbol{V}$ into two parts: $\boldsymbol{V}^+ \in \mathbb{C}^{K-N+1 \times K}$ and $\boldsymbol{V}^- \in \mathbb{C}^{N-1 \times K}$, as shown in Fig. \ref{proofNR} (b); the parts  are separated by the red dotted line in Fig. \ref{proofNR} (b).
According to Corollary \ref{coro1}, the $N$-th column of $\boldsymbol{V}$ has $i+1$ non-zero elements. Also,  its $(i+1)$-th non-zero element is $\boldsymbol{V}(1+iN, N)$, and is located in $\boldsymbol{V}^-$ since $K-N+2 \le 1+iN \le K$. 

 Along the diagonal including $\boldsymbol{V}(1+iN, N)$, we can find one non-zero element $\boldsymbol{V}(K-N+1, j)$. This is because, the elements onto the diagonal including $\boldsymbol{V}(1+iN, N)$ are all the same and hence non-zero according to the Toeplitz constraint. For the non-zero element located in the $K-N+1$-th row, its index of column is $N - ((1+iN)-(K-N+1)) = K - iN = j$. So we get the non-zero element $\boldsymbol{V}(K-N+1, j)$.

 For the $j$-th column of $\boldsymbol{V}$, we  know that its $N-1$ elements located in $\boldsymbol{V}^-$ are all 0, as shown in Fig. \ref{proofNR} (b) and according to Corollary \ref{coro1}. For its part located in $\boldsymbol{V}^+$, there are $i$ non-zero elements since $K-N+1 = (i-1)N + j + 1$ and it has the non-zero elements $\boldsymbol{V}(j+1, j)$, $\boldsymbol{V}(N+j+1, j)$, $\cdots$, $\boldsymbol{V}((i-1)N + j + 1, j)$. We have similar results for the $(j+N)$-th, $(j+2N)$-th, $\cdots$, $(j+iN)$-th (exactly the $K$-th) column. Therefore, we have $i+1$ columns in $\boldsymbol{V}$ that have only $i$ non-zero elements located in the $(j+1)$-th, $(N+j+1)$-th, $\cdots$, $((i-1)N + j + 1)$-th rows of $\boldsymbol{V}^+$ and the elements located in $\boldsymbol{V}^-$ are all 0.

 According to the Leibniz formula for finding the determinant of a matrix \cite{anton2013elementary}, after selecting $i$ non-zero elements located in different rows out of the corresponding $i+1$ columns, there must be one column left. For the column left, its row indices of non-zero elements have been occupied, so there must be 0 existing in the Leibniz formula. Therefore, the determinant of $\boldsymbol{V}$ is zero and  Lemma \ref{le1} is proved.
\end{proof}

\noindent As shown in Fig. \ref{proofNR} (a), $K=16$ and $N=7$, so there is no $i \in \mathbb{N^+}$ that satisfies $K=iN$. From this figure, we can see there exist three columns that have only two non-zero elements located in $\boldsymbol{V}^+$, which are marked by the red boxes.

When $\phi= \pm 2/N, \pm 4/N, \pm 6/N, \cdots$ and $\tau _n^o \ne (n-1)/N$, there exist possibly multiple groups of $\Delta \tau $, $\{\tau _n^o\}_{n=1,2,...,N}$ and $\boldsymbol{s}$ that correspond to the same $\boldsymbol{y}$. In fact, it is intractable to solve for multiple group of $\Delta \tau $, $\{\tau _n^o\}_{n=1,2,...,N}$ and $\boldsymbol{s}$ directly from $\boldsymbol{y} = \boldsymbol{V} \boldsymbol{s}$ after incorporating $\phi= \pm 2/N, \pm 4/N, \pm 6/N, \cdots$ into it, since $\boldsymbol{V}$ contains many $\sin$ and $\exp$ terms and it is a transcendental equation. To shed light on the reason, taking $\phi= 2/N, N=K=4$ and BPSK modulation symbol [-1, +1], we can obtain two different groups of $\Delta \tau $, $\{\tau _n^o\}_{n=1,2,...,N}$ and $\boldsymbol{s}$, i.e., $1/N$, $\{3/N, 1/N, 2/N, 0 \}$, $[+1, -1, -1, +1]$ and $3/N$, $\{0, 2/N, 3/N, 1/N \}$, $[-1, +1, +1, -1]$ that both correspond to the same $\boldsymbol{y}$. Upon further examination of this example, we find that the non-uniqueness of the ON-OFF switching pattern is due to the periodicity and parity properties of the $\sin$ and $\exp$ terms (where the $\exp$ term can be represented in terms of $\sin$ and $\cos$) in $\boldsymbol{V}$.

\subsection{Measures for Defending the Scrambling} \label{defending}
We can enhance the wireless security of the TMA transmitter by rotating it at a certain angle $\theta _r$ to satisfy $\cos (\theta _e + \theta _r) - \cos (\theta _0 + \theta _r) = \pm 2/N, \pm 4/N, \pm 6/N, \cdots$. In this case we need to know the eavesdropper location. Based on the aforementioned conditions for a secure TMA, it is impossible for an eavesdropper to apply our proposed defying scheme to resolve the ambiguity when $\phi= \pm 2/N, \pm 4/N, \pm 6/N, \cdots$. Furthermore, the eavesdropper cannot crack the TMA OFDM system completely by any means under the first class of condition, as the system is underdetermined, thereby ensuring sufficient security. 

{We should note that by rotating the TMA transmitter  the signal-to-noise ratio (SNR) at the legitimate receiver is not affected, since the TMA system is still subject to the system configurations shown in Section \ref{sysmodel} after rotation. The received signal still satisfies \eqref{legireceiver} and hence the SNR is ${\frac{N \Delta \tau^2 }{K \sigma_z^2}}$, which is independent with the direction of the legitimate receiver.}
 
We can also design mechanisms to defend the TMA scrambling against the eavesdroppers by exploiting the  need of ICA to operate in a stationary  stationary environment.
We can disturb the applicability of ICA by changing the mixing matrix of TMA over time.  This can be done by selecting randomly $\{\tau _n^o\}_{n=1,2,...,N}$ in each OFDM symbol period according to $\tau _n^o \in \{\frac{h-1}{N}\}_{h=1,2,...,N}$ and $\tau _p^o \ne \tau _q^o$. Meanwhile this mechanism is able to maintain the DM functionality as it still satisfies the scrambling scheme. The cost is that this will increase the hardware design complexity since it requires the switch ON-OFF pattern changing frequently. Moreover, we can degrade ICA by disturbing the independence of source signals, which can be achieved by randomly assigning some identical symbols to be transmitted on multiple subcarriers but it will result in lower bit rate. These two methods do not require knowledge of the eavesdropper location.

\linespread{1.3}
\begin{table}[t]
	\begin{center}
	\caption{Main Notations and Definitions}\label{tab2}
	\begin{tabular}{p{1.6cm}<{\centering}|p{5.4cm}<{\centering}}
	\hline
        \hline
        \textbf{Notation} & \textbf{Definition}\\
        \hline
	$N$ & The number of antenna elements \\
	\hline
	$K$ & The number of subcarriers \\
        \hline
	$H$ & The number of used OFDM symbols in ICA \\
        \hline
        $\boldsymbol{V}$ and $V_m$ & The mixing matrix defined in \eqref{eq5} and its element defined in \eqref{eq2}\\
	\hline
	$\Delta \tau$ and $\tau_n^o$ & The normalized ON time duration and the normalized switch ON  time instant \\
        \hline
        (C1)-(C3) & The rules for choosing $\Delta \tau$ and $\tau_n^o$ \\
        \hline
        $\theta _0$ and $\theta _e$ & The direction of the legitimate user and the direction of the eavesdropper \\
        \hline
        $\phi$ & The difference between $\cos \theta _e$ and $\cos \theta _0$ \\
        \hline
        $\hat J _G$ & The sample estimate of the non-Gaussianity metric \\
        \hline
        $k$ & The length of reference vector used in the KNN-based ambiguity resolving algorithm \\
        \hline
        	
	\end{tabular}
	\end{center}
\end{table}
\linespread{1.16}

\section{NUMERICAL RESULTS}\label{numerical}
In this section, we present numerical results to evaluate our proposed TMA scrambling defying and defending schemes. First of all, we summarize the main parameters and their definitions used in the simulations in Table \ref{tab2} in order to enhance the readability of the paper. Then, we simulate a TMA OFDM-enabled DM system with $N = 7$ antennas as the same as in \cite{tvt2019time}. We set the TMA parameters according to the rules (C1)-(C3) and adopt the BPSK modulation. Also, We use BER as the performance metric to evaluate the proposed approaches. For the receiver noise, we adopt kurtosis as the non-Gaussianity metric, and kurtosis, as a higher-order statistics, can inherently mitigate the noise effect \cite{hyvarinen1999gaussian}. The value of $\phi$ is assumed to be known by the eavesdropper and $k=3$ unless otherwise specified. For other parameters, they are specified in the corresponding experiments. The results and analyses are as follows.

\linespread{1.3}
\begin{table}[t]
    \begin{center}
    \caption{Average BER of the TMA OFDM-enabled DM system}\label{tab1}
    \begin{tabular}{p{0.3cm}<{\centering}|p{0.5cm}<{\centering}|p{0.5cm}<{\centering}|p{1.58cm}<{\centering}|p{1.4cm}<{\centering}|p{1.73cm}<{\centering}}
    \hline
    \hline
    No. & $\theta _0$($^\circ$) & $\theta _e$($^\circ$) & Original BER & Defied BER & Defended BER   \\ \hline
    1   & 50    & 90          & 0.4964      & 0  & 0.4618   \\ 
    2   & 60    & 30          & 0.4952      & 0  & 0.4020   \\ 
    3   & 80    & 40          & 0.4948      & 0  & 0.4951   \\ 
    4   & 90    & 50          & 0.4891      & 0  & 0.4834  \\ 
    5   & 100   & 80          & 0.4824      & 0  & 0.4142   \\ \hline
    6   & 30    & 70          & 0.5210      & 0  & 0.5103   \\ 
    7   & 40    & 90          & 0.4742      & 0  & 0.4675   \\ 
    8   & 50    & 130         & 0.4934      & 0  & 0.4883   \\ 
    9   & 80    & 150         & 0.4876      & 0  & 0.4750   \\ 
    10  & 90    & 140         & 0.4607      & 0  & 0.4631   \\ \hline
    \end{tabular}
    \end{center}
\end{table}
\linespread{1.16}

\subsection{Effectiveness of the Proposed Scrambling Defying and Defending Schemes}
We simulated a TMA OFDM scenario with   $K = 16$ OFDM subcarriers, $H = 1e4$ data samples, and conducted $10$ groups of experiments, where $\theta _0$ and $\theta _e$ were chosen differently in each experiment as shown in Table \ref{tab1}. 
In experiments 1$-$5, $\theta _0$ and $\theta _e$ were taken as known by the eavesdropper when resolving the phase ambiguity, while in experiments 6$-$10, $\theta _0$ and $\theta _e$ were taken as unknown  and $\phi$ was estimated according to \eqref{eqLambda}. For each experiment, we set the  SNR as 20 dB and $\Delta \tau = (N-1)/N$, $\{\tau _n^o\}_{n=1,2,...,N} = (n-1)/N$.
The BER results are shown in Table \ref{tab1}.
In the table, `Original BER' denotes the BER at $\theta _e$ based on the raw signals received by the eavesdropper, while `Defied BER' denotes  the BER based on the recovered signals via the proposed defying scheme, and `Defended BER' denotes the BER after  applying the 
 defending mechanism to confront the proposed defying scheme.
The mechanism applied here is that  of changing $\{\tau _n^o\}_{n=1,2,...,N}$ randomly in each OFDM symbol period.  From 
Table \ref{tab1} we can see that the eavesdropper experiences non-zero original BER  due to the  TMA scrambling. {In all cases, the defied BER is $0$ \footnote{{Here, ``BER = 0'' in our experiments reflects the ability of the proposed defying algorithm to fully resolve the TMA scrambling under given conditions. It does not imply that BER = 0 would occur in a real-world implementation, where achieving an exact BER of 0 is impossible due to practical imperfections, as our work focuses on demonstrating the feasibility of defying TMA scrambling and does not account for all practical impairments.}}, meaning that the eavesdropper is able to defy the scrambling completely and correctly recover the transmitted source signals.} Also, in all cases, the defended BER is not $0$, demonstrating that the proposed mechanism takes effect in defending the TMA scrambling and enhancing the system security.

\begin{figure}[t]
\centerline{\includegraphics[width=2.9in]{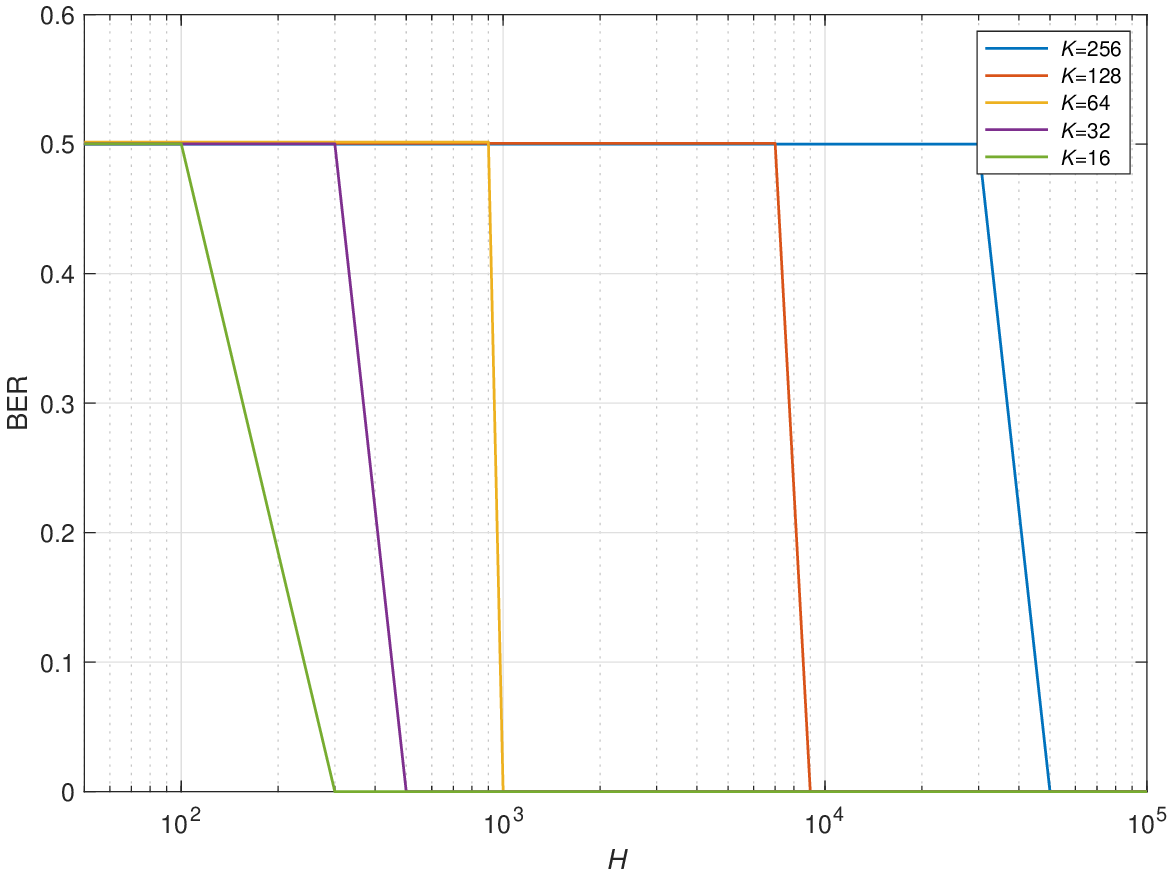}}
\caption{The defying performance of the proposed scheme for different values of number of subcarriers ($K$)  and data length ($H$).}\label{fig3}
\end{figure}

The defying performance of our proposed defying scheme for different values of $K$ and $H$ is shown in Fig. \ref{fig3}. In this figure, we set $\theta _0 = 60^{\circ}$, $\theta _e = 40^{\circ}$, $\Delta \tau = 1/N$, $\{\tau _n^o\}_{n=1,2,...,N} = (n-1)/N$ and $\phi$ is taken as known. The SNR is set as $20$ dB. From Fig. \ref{fig3}, we can observe that the defying performance improves with $H$, as expected, and the BER can be reduced to $0$ even when $K = 256$, demonstrating  the great scalability of our proposed scheme. Moreover, it can be seen that the defying scheme requires many more samples  when $K$ is large. This is because a larger $K$ corresponds to  a larger number of source signals, and thus ICA needs more samples to work well. Additionally, when $K$ is large and $|m|$ close to $K$, $V_m$ is  very small due to the term $\sinc (m \pi \Delta \tau _n)$ in (\ref{eq2}). Considering that there are also   estimation errors in ICA, for large $K$ and $|m|$ close to $K$, $|V_m|$   could be even smaller than the estimation errors of ICA, which will eventually lead to  failure of the ambiguity resolving algorithm. Therefore, a large number of samples are needed to improve the accuracy of ICA estimates and accordingly the performance of the ambiguity resolving procedure when facing a large $K$.

\begin{figure}[t]
\centerline{\includegraphics[width=2.95in]{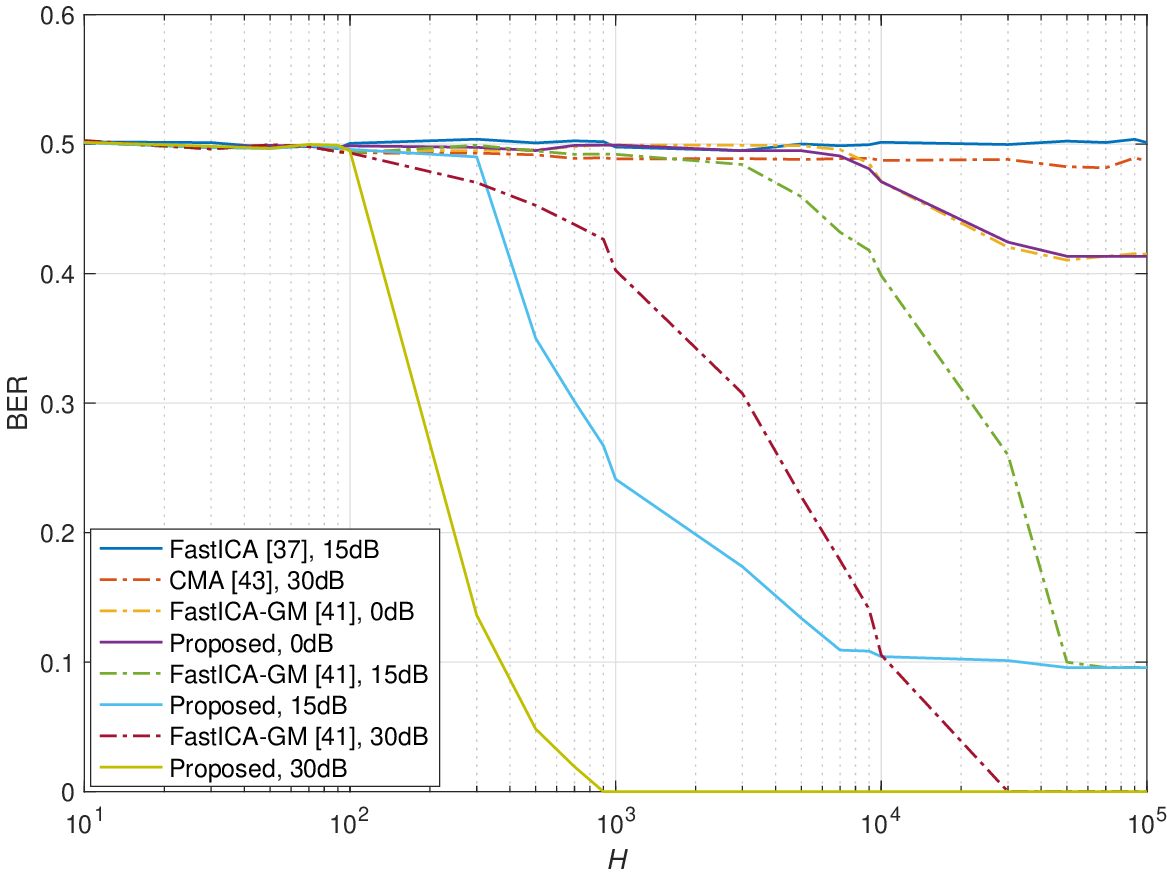}}
\caption{Comparison of the scrambling defying efficiency among CMICA and benchmarks for different values of $H$ and SNR.
}\label{fig4}
\end{figure}

\subsection{Efficiency of the Proposed Defying Scheme}
{Here, we compare the scrambling defying performance of the proposed CMICA algorithm against several benchmark methods under different SNR values and numbers of OFDM symbols. The benchmarks include the classical FastICA without noise removal \cite{bingham2000fast}, FastICA with Gaussian moments-based noise mitigation (FastICA-GM) \cite{hyvarinen1999gaussian}, and the constant modulus algorithm (CMA) \cite{johnson1998blind}. 
CMA is particularly chosen for comparison because, like CMICA, it is designed to estimate the mixing matrix blindly for constant modulus signals. For the system configurations, the parameters $\theta _0 = 60^{\circ}$, $\theta _e = 40^{\circ}$, $K = 16$ were used, and 30 different sets of TMA parameters, $\Delta \tau $ and $\{\tau _n^o\}_{n=1,2,...,N}$, are generated randomly according to the rules (C1)-(C3). The SNR values were set at 30 dB, 15 dB, and 0 dB, respectively. The resulting BERs, averaged on these different groups of TMA parameters, are shown in Fig. \ref{fig4}. From the figure, it can be observed that the BERs of both the FastICA-based and CMA-based defying schemes remain almost constant regardless of the used number of samples. This is because FastICA without noise mitigation cannot effectively address the noise, and CMA inherently struggles to distinguish different source symbols despite utilizing the constant modulus property. In contrast, the CMICA-based defying scheme and the FastICA-GM-based defying scheme exhibit a gradual reduction in BER as $H$ increases, particularly at high SNR. Notably, the BER of CMICA decreases much faster and more significantly than that of FastICA-GM, highlighting its superior sample utilization efficiency. For instance, with 30 dB SNR, CMICA requires approximately $10^3$ samples to defy scrambling completely, i.e., BER = 0, whereas FastICA-GM requires more than $10^4$. 
At lower SNR = $15$ dB, a similar trend can be observed, albeit with reduced performance, while at 
$0$ dB, 
there is no performance gap between CMICA and FastICA-GM, and their BER reductions occur at almost the same rate. This demonstrates that the performance advantage of CMICA diminishes and its effectiveness converges with FastICA-GM in noisy environments, as expected.}

\begin{figure}[t]
\centerline{\includegraphics[width=2.9in]{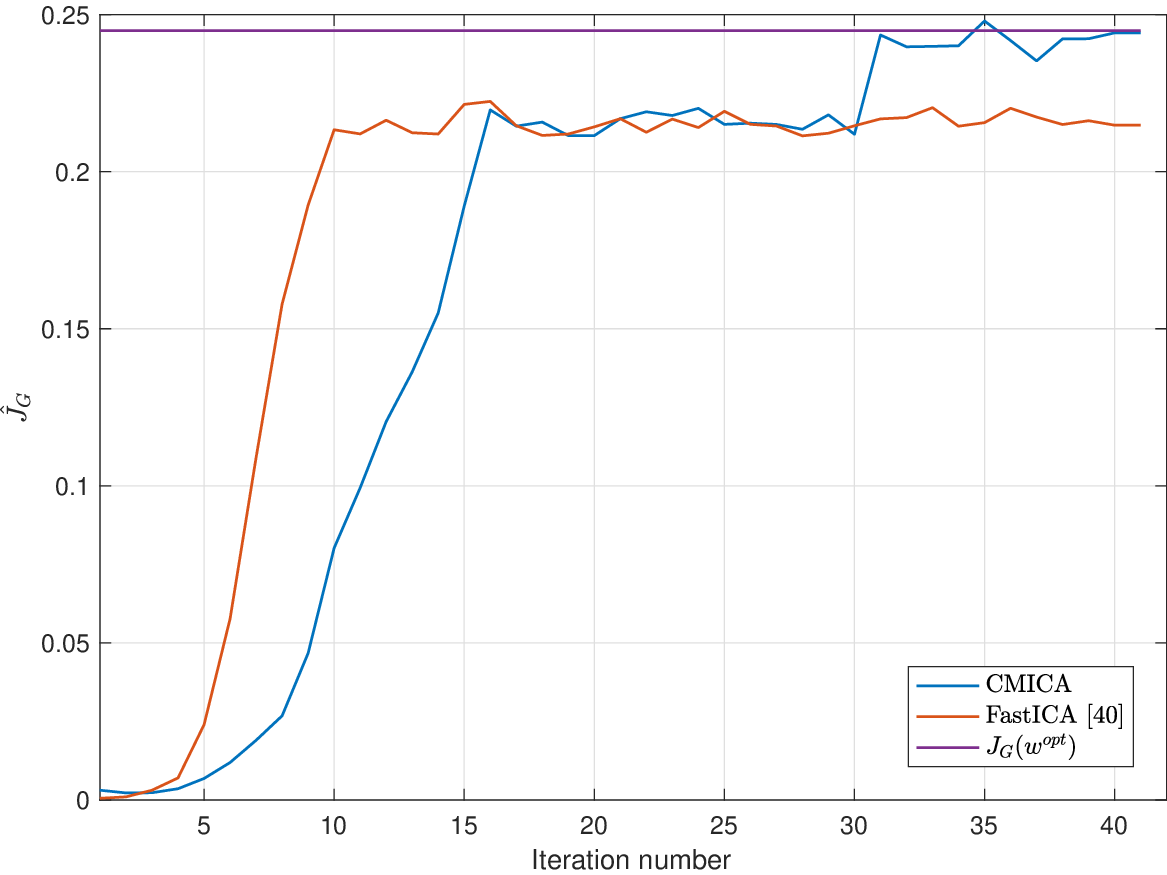}}
\caption{Sample estimate of non-Gaussianity with respect to the ICA iteration number.}\label{nonGaussianFig}
\end{figure}

In Fig. \ref{nonGaussianFig}, we analyze the behavior of the proposed CMICA method during the two-stage iteration process when using a small number of samples. Specifically, we compare the sample estimates of total non-Gaussianity for CMICA and FastICA \cite{bingham2000fast} over the course of iterations, under the following experimental settings: $K = 16$, $\theta _0 = 60^{\circ}$, $\theta _e = 40^{\circ}$, $\Delta \tau = 1/N$, $\{\tau _n^o\}_{n=1,2,...,N} = (n-1)/N$, with $H=1,000$ samples and a maximum of $40$ iterations.
From Fig. \ref{nonGaussianFig}, we observe that CMICA undergoes two distinct convergence phases. In the first stage, the Newton iteration rapidly converges to a solution, but the resulting non-Gaussianity deviates from $J_G(\boldsymbol{w}_{opt})$. In the second stage, the gradient descent iteration refines the solution, enabling CMICA to achieve a larger value of $\tilde{J}_G(\boldsymbol{w})$ that is near $J_G(\boldsymbol{w}_{opt})$. This indicates that the found $\boldsymbol{w}$ in the end is close to $\boldsymbol{w}_{opt}$. In contrast, FastICA fails to reach $J_G(\boldsymbol{w}_{opt})$ under the same conditions. 
This failure arises from the reliance of FastICA on decorrelation, which assumes sufficient statistical independence of the source signals. With limited data, this assumption is violated, leading to suboptimal convergence. These results further demonstrate that the two-stage approach of CMICA, which omits decorrelation in the second stage, can achieve superior performance in estimating the unmixing matrix for small sample scenarios.

\begin{figure}[t]
\centering
\subfigure[]{
\centering
\includegraphics[width=2.9in]{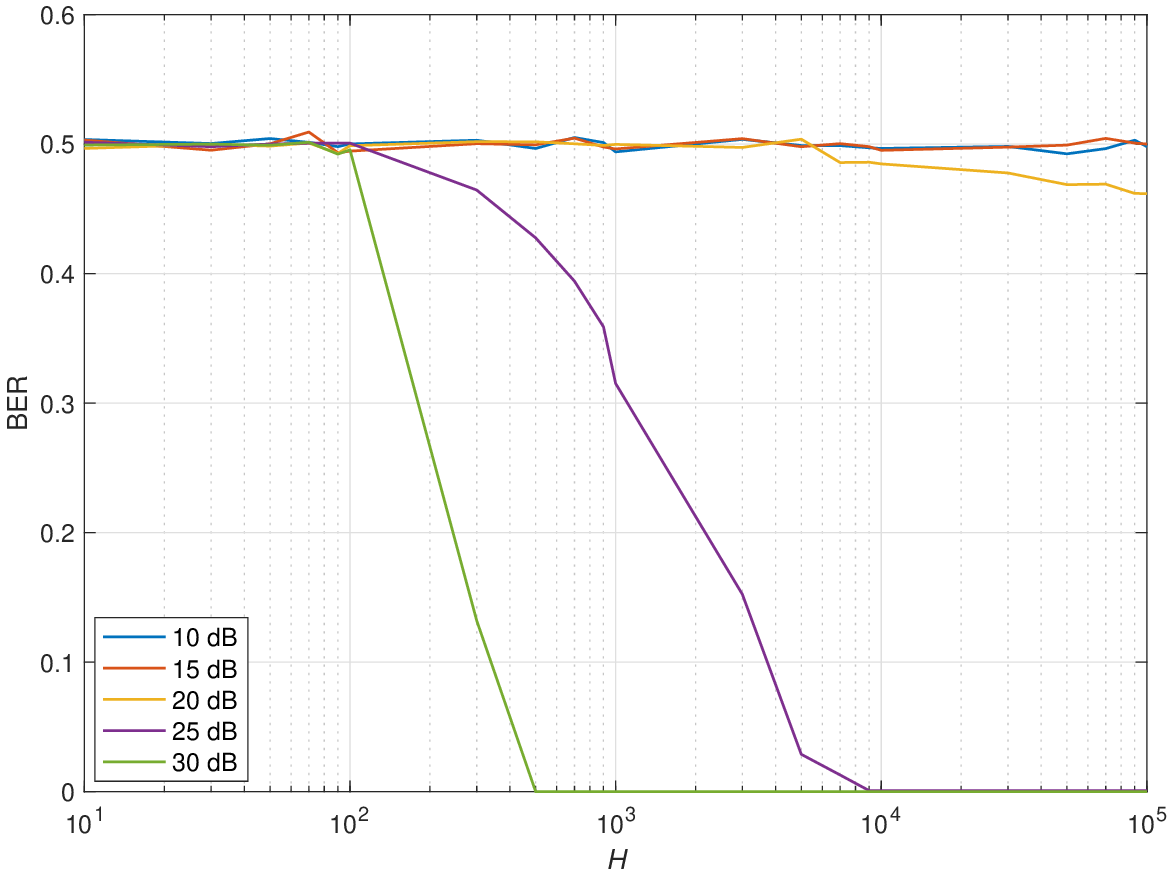}
}

\subfigure[]{
\centering
\includegraphics[width=2.9in]{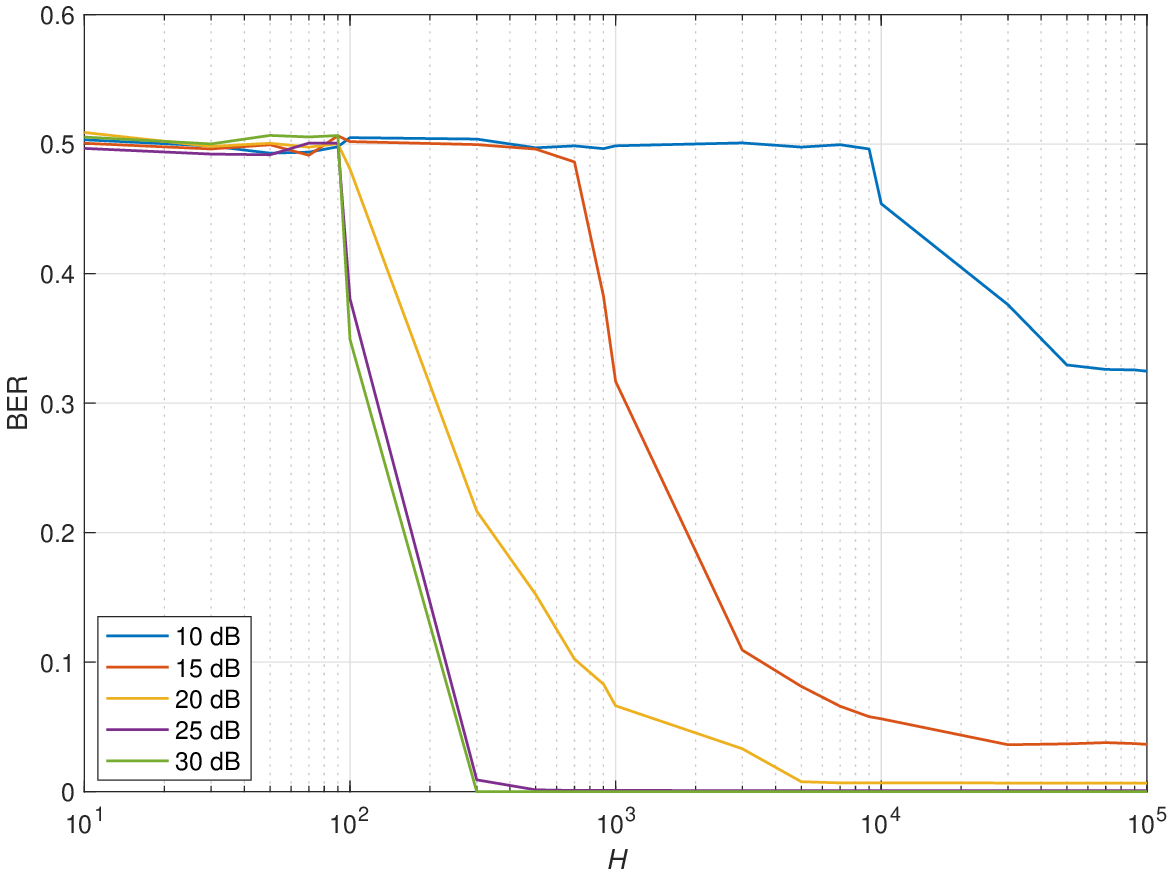}
}
\centering
\caption{The scrambling defying performance with respect to the number of samples and the SNR: a) $k=1$; b) $k=3$ in the KNN-based ambiguity resolving scheme.}
\label{ARAcomparison}
\end{figure}

\subsection{Robustness of the Proposed Defying Scheme}
{Next, we demonstrate the robustness of the proposed KNN-based ambiguity resolving algorithm to the ICA estimation errors in Fig. \ref{ARAcomparison}, focusing on the effect of varying the reference vector length $k$. The experimental configurations include $K=16$, $\theta _0 = 60^{\circ}$, $\theta _e = 40^{\circ}$, $\Delta \tau = 1/N$, $\{\tau _n^o\}_{n=1,2,...,N} = (n-1)/N$, with SNR values set to $(10, 15, 20, 25, 30)$ dB and varying numbers of OFDM symbols $H$. Fig. \ref{ARAcomparison} (a) shows the results for $k = 1$, while Fig. \ref{ARAcomparison} (b) shows the results for $k = 3$. From both figures, it can be observed that the defied BER declines more rapidly at higher SNR values as $H$ increases, while the performance degrades at lower SNRs, as expected. Even with a large $H$, the scrambling defying performance deteriorates significantly at lower SNRs, indicating that noise effect cannot be completely mitigated by kurtosis and larger ICA estimation errors induced by the lower SNR lead to worse defying performance. Also, notice that the effect of $k$ is evident when comparing Fig. \ref{ARAcomparison} (a) and Fig. \ref{ARAcomparison} (b). Especially at lower SNRs (e.g., 10–20 dB), the defying performance for $k = 3$ is significantly better than for $k = 1$. The improved robustness for larger $k$ arises from the principle of KNN, where increasing the number of reordered mixing matrices as candidate solutions provides greater resilience to noise. For $k = 1$, a single candidate solution is used in phase ambiguity resolution, which is highly susceptible to ICA estimation errors. In contrast, when $k = 3$, multiple candidate solutions are considered, allowing the ambiguity resolving algorithm to exclude plausible but incorrect solutions. Therefore, increasing $k$ can improve the defying robustness by leveraging the KNN principle to mitigate the impact of ICA estimation errors on the reordering and phase ambiguity resolving processes.}

In Fig. \ref{snrFig}, we demonstrate the robustness of the proposed scrambling defying method to receiver noise and compare its performance with the benchmark \cite{bingham2000fast}, the original TMA OFDM-enabled DM system \cite{tvt2019time}, and the optimal defying scheme. {The optimal defying scheme means that the eavesdropper is assumed to have perfect knowledge of the actual mixing matrix, allowing it to bypass the TMA scrambling completely; in this case, the BER is affected only by receiver noise. The original TMA system, on the other hand, refers to the original system without any scrambling defying scheme applied. In this experiment, we set $K=16$, $\theta _0 = 60^{\circ}$, $\theta _e = 40^{\circ}$, $\Delta \tau = (N-1)/N$, $\{\tau _n^o\}_{n=1,2,...,N} = (n-1)/N$, $H = 1e4$, and various SNR levels ranging from -50 dB to 50 dB. From Fig. \ref{snrFig}, we can observe again that the proposed defying method with $k = 3$ outperforms both the $k = 1$ configuration and the benchmark \cite{bingham2000fast}, exhibiting superior robustness to noise, as previously explained. Also, the defied BER of the proposed scheme decreases sharply with increasing SNR, contrasting with the gradual decline of the optimal defied BER, and the performance gap between the proposed method and the optimal scheme at low SNR levels is large, indicating the sensitivity of the proposed defying scheme to high noise levels. One reason is that larger noise levels lead to higher ICA estimation errors, and these errors directly impact the mixing matrix reordering performance \footnote{In fact, the ICA estimation error affects the resolving performance of amplitude scaling ambiguity first and then the reordering performance. We omit to put the amplitude scaling ambiguity here since resolving the permutation and phase scaling ambiguity is much more challenged and their effects on the overall defying performance are more significant.}, which in turn affects the phase ambiguity resolving performance. The resulting \textit{error propagation} amplifies receiver noise and makes the proposed scheme more susceptible to noise.}

\begin{figure}[t]
\centerline{\includegraphics[width=2.9in]{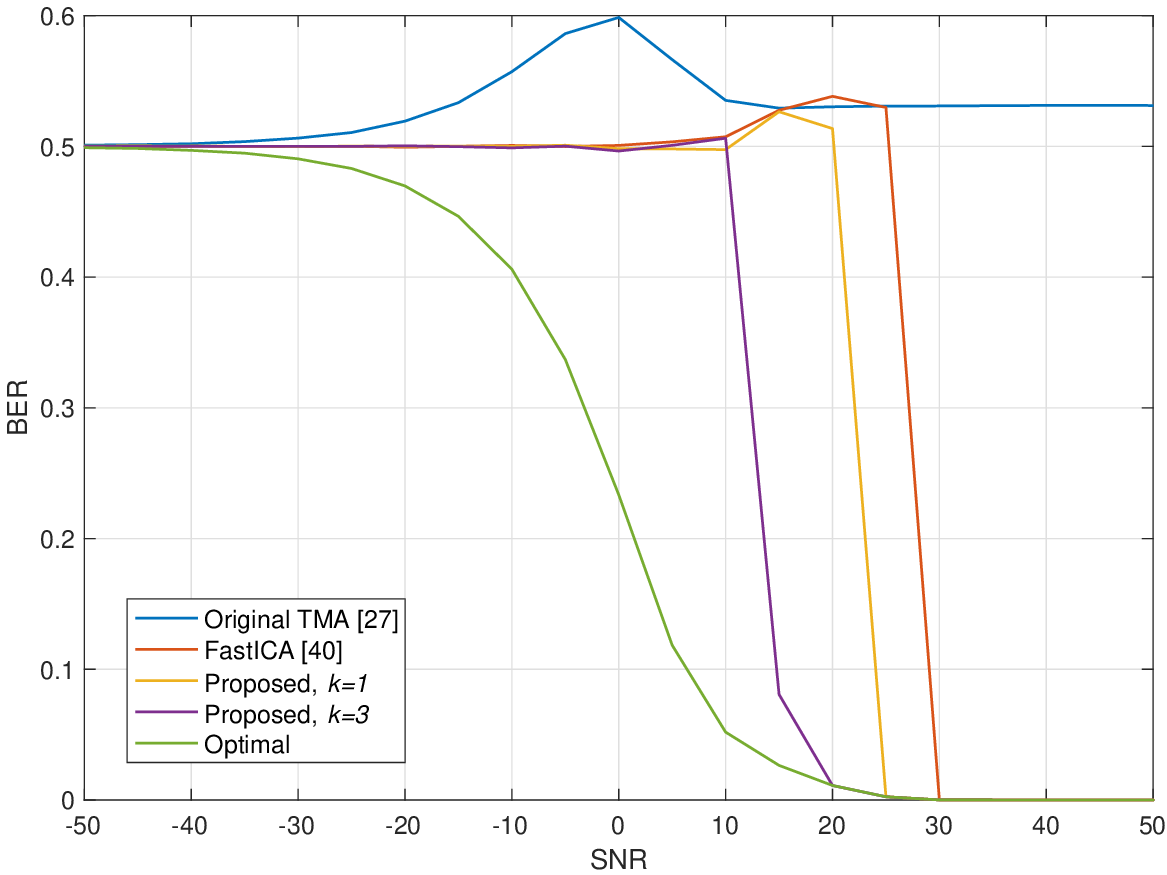}}
\caption{Comparison of the defied BER vs. different SNR among our proposed method, the benchmark and the optimal defying.}\label{snrFig}
\end{figure}

{Moreover, an interesting phenomenon can be observed in Fig. \ref{snrFig}: the BER increases with SNR in specific ranges for all methods except the optimal scheme before eventually declining. For example, the BER increases from -30 to 0 dB for the original TMA system, from 10 to 15 dB for the proposed method with $k = 1$, etc. This behavior is caused by the superposition of noise and TMA scrambling effects. In the case of the optimal defying scheme, which is affected solely by noise, the BER decreases gradually as SNR increases, following the expected trend in general communication systems. However, for the original TMA system, the impact of TMA scrambling becomes significant as SNR increases. At very low SNRs, noise dominates, and the BER remains around 0.5. As the SNR increases, the TMA scrambling effect starts to take over, leading to a temporary increase in BER. When the SNR becomes sufficiently high and noise is negligible, the BER converges to a value that reflects the sole impact of TMA scrambling. For the proposed defying scheme, which are capable of defying TMA scrambling, the superposition effect is less pronounced. While their BER also increases in low-SNR regions due to residual scrambling effects, their ability to mitigate scrambling ensures that the increase is less severe compared to the original TMA system. This behavior highlights again the their effectiveness in mitigating the effects of TMA scrambling under different noise levels.}

\subsection{Trade-off between Power Efficiency and Security}

\begin{figure}[t]
\centerline{\includegraphics[width=2.9in]{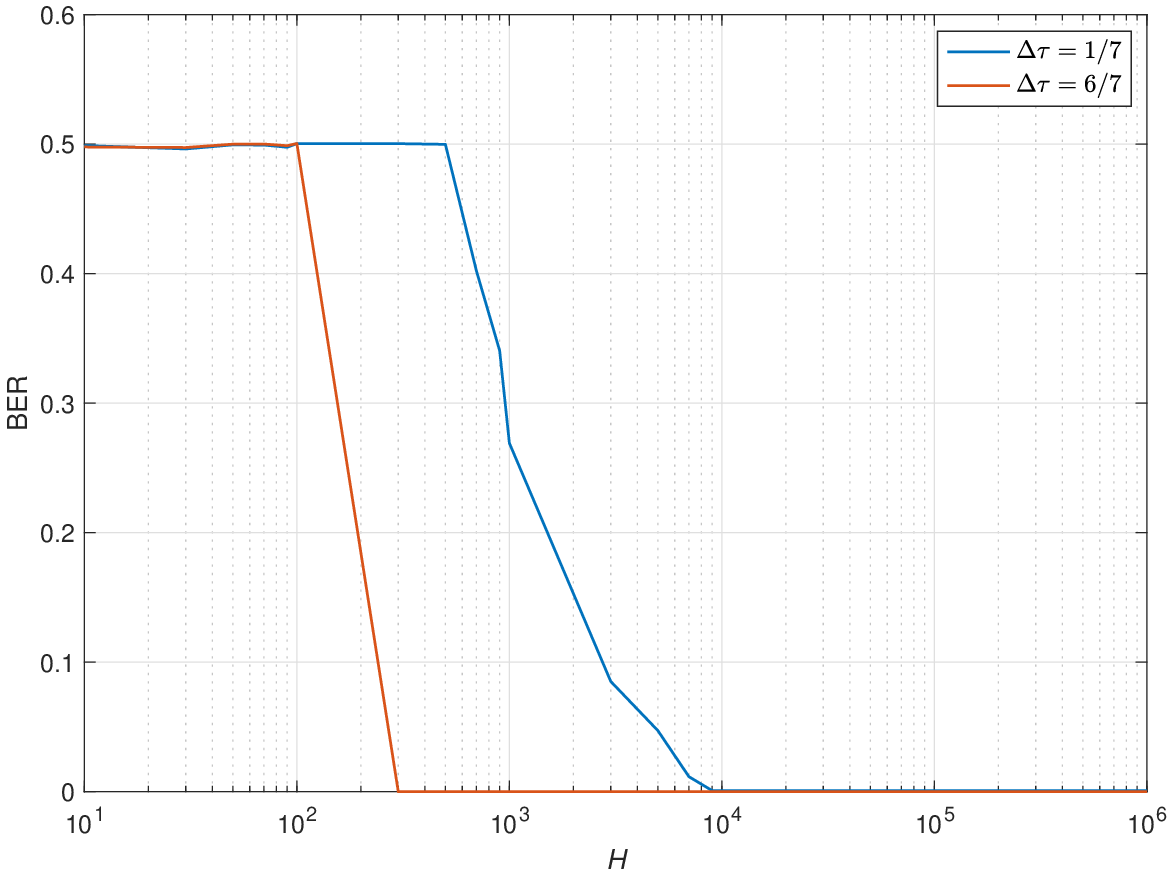}}
\caption{Defied BER of the proposed defying scheme with different $\Delta \tau$.}\label{pes}
\end{figure}

{Finally, we compare the BER performance of the proposed defying scheme under different $\Delta \tau$ values in Fig. \ref{pes}. In this experiment, we set $K=16$, $\theta _0 = 60^{\circ}$, $\theta _e = 40^{\circ}$, $k=3$, SNR = 30 dB, and generate $\{\tau _n^o\}_{n=1,2,...,N} = (n-1)/N$ randomly according to (C1)-(C3). From Fig. \ref{pes}, we can observe clearly that the proposed scheme with $\Delta \tau = 1/7$ exhibits poorer performance compared to that with $\Delta \tau = 6/7$, as the former requires significantly more OFDM samples to defy the scrambling completely, implying that the TMA OFDM-enabled DM system with a smaller $\Delta \tau$ is more secure. This observation aligns with intuition, as $\Delta \tau$ directly affects the power efficiency of the TMA system. As noted in [27], the power efficiency is given by $\Delta \tau^2 \times 100\%$. When $\Delta \tau$ is smaller, the power efficiency decreases substantially, resulting in a significant reduction in the effective SNR. This reduction enhances the system scrambling security against the proposed defying scheme but comes at the cost of wasted power. Consequently, $\Delta \tau$ should be chosen carefully to balance the power efficiency and security in the TMA OFDM-enabled DM systems.}

\section{Conclusion}
DM via TMAs transmitting OFDM waveforms has been viewed as an emerging hardware-efficient and low-complexity approach to secure wireless mobile communication systems. In this paper, we have presented, for the first time, a comprehensive assessment and analysis of wireless security of this kind of system. First, we have shown that this DM transmitter is not secure enough from the perspective of eavesdroppers. Specifically, we have formulated the defying of the TMA scrambling as an ICA problem for the eavesdropper, and shown that under certain conditions the ICA ambiguities can be resolved by exploiting prior knowledge about the TMA OFDM system. For the ICA part, we have proposed an efficient ICA method, namely CMICA, that applied to constant modulus signals and works well for short data lengths.  For the ambiguities, we construct a KNN-based resolving algorithm by exploiting jointly the Toeplitz structure of the mixing matrix, knowledge of data constellation, and the rules for designing the TMA ON-OFF pattern, etc. Then, we have showcased two kinds of conditions, for which the TMA OFDM systems are secure enough, and proposed some mechanisms that can be used to defend the scrambling against the attack of eavesdroppers. Through numerical results and analyses, we have demonstrated the effectiveness, efficiency, and robustness of our proposed defying and defending schemes in the end. Future studies will consider the extension of CMICA to other scenarios with constant-modulus signals. Also, the proposed defying scheme is promising to implement multiple-user DM simultaneously considering that the original TMA OFDM transmitter supports only single-user DM at a time.

\begin{appendices}
\section{Derivations of \eqref{newton} and \eqref{gradient}}
{According to the Lagrange multiplier method, a (local) optimum of \eqref{nonGaussianity} under the constraint $E\{|\boldsymbol{w}^{{\dag}}\tilde{\boldsymbol{y}}|^2\}= \boldsymbol{w}^{{\dag}} E\{\tilde{\boldsymbol{y}} \tilde{\boldsymbol{y}}^{\dag}\} \boldsymbol{w} = \|\boldsymbol{w}\|^2 = 1$ (note that $E\{\tilde{\boldsymbol{y}} \tilde{\boldsymbol{y}}^{\dag}\} = \boldsymbol{I}$ after whitening) are obtained when $\nabla L(\boldsymbol{w}, \lambda) = 0$, i.e.,
\begin{equation}\label{LG}
    \nabla E\{G(|\boldsymbol{w}^{{\dag}}\tilde{\boldsymbol{y}}|^2)\} - \lambda \nabla E\{|\boldsymbol{w}^{{\dag}}\tilde{\boldsymbol{y}}|^2\} = 0,
\end{equation}
where $\lambda = E\{|\boldsymbol{w}_{opt}^{\dag} \tilde{{\boldsymbol{y}}}|^2 g(|\boldsymbol{w}_{opt}^{\dag} \tilde{{\boldsymbol{y}}}|^2) \}$  is the Lagrangian multiplier, and $\boldsymbol{w}_{opt}$ is a (locally) optimum $\boldsymbol{w}$.} The gradient is computed with respect to real and imaginary parts of $\boldsymbol{w}$, respectively. For the left hand of \eqref{LG}, we have
\begin{equation}
\begin{split}
    \nabla E\{G(|\boldsymbol{w}^{{\dag}}\tilde{\boldsymbol{y}}|^2)\} &= 
    2 \begin{bmatrix} 
    E\{{\rm Re}\{y_1(\boldsymbol{w}^{{\dag}}\tilde{\boldsymbol{y}})^*\} g(|\boldsymbol{w}^{{\dag}}\tilde{\boldsymbol{y}}|^2) \} \\ 
    E\{{\rm Im}\{y_1(\boldsymbol{w}^{{\dag}}\tilde{\boldsymbol{y}})^*\} g(|\boldsymbol{w}^{{\dag}}\tilde{\boldsymbol{y}}|^2) \} \\ 
    \vdots \\ 
    E\{{\rm Re}\{y_K(\boldsymbol{w}^{{\dag}}\tilde{\boldsymbol{y}})^*\} g(|\boldsymbol{w}^{{\dag}}\tilde{\boldsymbol{y}}|^2) \} \\
    E\{{\rm Im}\{y_K(\boldsymbol{w}^{{\dag}}\tilde{\boldsymbol{y}})^*\} g(|\boldsymbol{w}^{{\dag}}\tilde{\boldsymbol{y}}|^2) \} 
    \end{bmatrix},\\
    \\
    &= 2 E\{\tilde{\boldsymbol{y}} (\boldsymbol{w}^{{\dag}}\tilde{\boldsymbol{y}})^* g(|\boldsymbol{w}^{{\dag}}\tilde{\boldsymbol{y}}|^2) \},
\end{split}
\end{equation}
and
\begin{equation}
    \nabla E\{|\boldsymbol{w}^{{\dag}}\tilde{\boldsymbol{y}}|^2\} = 
    2 \begin{bmatrix} 
    {\rm Re}\{w_1\} \\ 
    {\rm Im}\{w_1\} \\ 
    \vdots \\ 
    {\rm Re}\{w_n\} \\ 
    {\rm Im}\{w_n\} 
    \end{bmatrix} E\{\tilde{\boldsymbol{y}} \tilde{\boldsymbol{y}}^{\dag}\} = 
    2 \boldsymbol{w}.
\end{equation}
Then we obtain
\begin{equation}\label{fistorder}
\begin{split}
    & \nabla E\{G(|\boldsymbol{w}^{{\dag}}\tilde{\boldsymbol{y}}|^2)\} - \lambda \nabla E\{|\boldsymbol{w}^{{\dag}}\tilde{\boldsymbol{y}}|^2\} = \\
    & 2 E\{\tilde{\boldsymbol{y}} (\boldsymbol{w}^{{\dag}}\tilde{\boldsymbol{y}})^* g(|\boldsymbol{w}^{{\dag}}\tilde{\boldsymbol{y}}|^2) \} - 2 \lambda \boldsymbol{w}.
\end{split}
\end{equation}
Next, 
\begin{equation}
\begin{split}
    & \nabla^2 E\{G(|\boldsymbol{w}^{{\dag}}\tilde{\boldsymbol{y}}|^2)\} = \\
    & 2 E\{\tilde{\boldsymbol{y}} \tilde{\boldsymbol{y}}^{\dag} g(|\boldsymbol{w}^{{\dag}}\tilde{\boldsymbol{y}}|^2) + 2 \tilde{\boldsymbol{y}} \tilde{\boldsymbol{y}}^{\dag} (\boldsymbol{w}^{{\dag}}\tilde{\boldsymbol{y}})^* (\boldsymbol{w}^{{\dag}}\tilde{\boldsymbol{y}}) g'(|\boldsymbol{w}^{{\dag}}\tilde{\boldsymbol{y}}|^2) \} \boldsymbol{I},
\end{split}
\end{equation}
Utilizing the approximation
\begin{equation}
    E\{\tilde{\boldsymbol{y}} \tilde{\boldsymbol{y}}^{\dag} g(|\boldsymbol{w}^{{\dag}}\tilde{\boldsymbol{y}}|^2)\} = E\{\tilde{\boldsymbol{y}} \tilde{\boldsymbol{y}}^{\dag}\} E\{g(|\boldsymbol{w}^{{\dag}}\tilde{\boldsymbol{y}}|^2)\},
\end{equation}
we have
\begin{equation}
\begin{split}
    & \nabla^2 E\{G(|\boldsymbol{w}^{{\dag}}\tilde{\boldsymbol{y}}|^2)\} = \\
    & 2 E\{g(|\boldsymbol{w}^{{\dag}}\tilde{\boldsymbol{y}}|^2)\} + 2 \boldsymbol{w}^{\dag} \tilde{{\boldsymbol{y}}}|^2 g'(|\boldsymbol{w}^{\dag} \tilde{{\boldsymbol{y}}}|^2) \} \boldsymbol{I}.
\end{split}
\end{equation}
Moreover,
\begin{equation}
    \lambda \nabla ^2 E\{|\boldsymbol{w}^{{\dag}}\tilde{\boldsymbol{y}}|^2\} = 2 \lambda \boldsymbol{I}.
\end{equation}
So, the Jacobian of the left-hand part of \eqref{LG} is
\begin{equation}\label{secondorder}
\begin{split}
    & \nabla^2 E\{G(|\boldsymbol{w}^{{\dag}}\tilde{\boldsymbol{y}}|^2)\} - \lambda \nabla ^2 E\{|\boldsymbol{w}^{{\dag}}\tilde{\boldsymbol{y}}|^2\} =\\
    & (2 E\{g(|\boldsymbol{w}^{{\dag}}\tilde{\boldsymbol{y}}|^2)\} + 2 \boldsymbol{w}^{\dag} \tilde{{\boldsymbol{y}}}|^2 g'(|\boldsymbol{w}^{\dag} \tilde{{\boldsymbol{y}}}|^2) \} - 2\lambda)\boldsymbol{I}.
\end{split}
\end{equation}
{Meanwhile, we approximate $\lambda$ using the current value of $\boldsymbol{w}$ instead of $\boldsymbol{w}_{opt}$, i.e.,
\begin{equation}\label{updatelambda}
    \lambda = E\{|\boldsymbol{w}^{\dag} \tilde{{\boldsymbol{y}}}|^2 g(|\boldsymbol{w}^{\dag} \tilde{{\boldsymbol{y}}}|^2) \}.
\end{equation}
}Therefore, based on \eqref{fistorder}, \eqref{secondorder}, \eqref{updatelambda}, we can obtain \eqref{newton} and \eqref{gradient}.

\end{appendices}

\bibliography{Jour24_Ref}
\bibliographystyle{IEEEtran}

\end{document}